\newtheorem{theorem}{Theorem}
\newtheorem{corollary}{Corollary}
\newtheorem{definition}{Definition}
\newtheorem{example}{Example}
\newtheorem{lemma}{Lemma}
\newtheorem{proposition}{Proposition}
\newtheorem{remark}{Remark}
\newenvironment{proof}[1][Proof]{\emph{#1.} }{\  \hfill $\square $ \vspace{5 pt}}
\tikzset{myptr/.style={decoration={markings,mark=at position 1 with %
       {\arrow[scale=2,>=stealth]{>}}},postaction={decorate}}}
\newcommand*\samethanks[1][\value{footnote}]{\footnotemark[#1]}
\DeclareFontFamily{T1}{calligra}{}
\DeclareFontShape{T1}{calligra}{m}{n}{<->s*[1.44]callig15}{}
\DeclareMathAlphabet\mathcalligra   {T1}{calligra} {m} {n}
\begin{document}

\title{Cycles to compute the full set of many-to-many stable matchings
\thanks{We thank Jordi Massó for very detailed comments,   Xuan Zhang for the counterexample in Example \ref{ejemplo 2} and two anonymous referees. We acknowledge financial support
from UNSL through grants 032016 and 030320, from Consejo Nacional
de Investigaciones Cient\'{\i}ficas y T\'{e}cnicas (CONICET) through grant
PIP 112-200801-00655, and from Agencia Nacional de Promoción Cient\'ifica y Tecnológica through grant PICT 2017-2355.}}


\author{Agustín G. Bonifacio\thanks{Instituto de Matem\'{a}tica Aplicada San Luis, Universidad Nacional de San
Luis and CONICET, San Luis, Argentina, and RedNIE. Emails: \texttt{abonifacio@unsl.edu.ar} (A. G. Bonifacio), 
\texttt{nmjuarez@unsl.edu.ar} (N. Juarez), \texttt{paneme@unsl.edu.ar} (P. Neme) and \texttt{joviedo@unsl.edu.ar} (J. Oviedo).} \and  
Noelia Juarez\samethanks[2] \and Pablo Neme\samethanks[2] \and Jorge Oviedo\samethanks[2]}

\date{\today}
\maketitle

\begin{abstract}
In a many-to-many matching model in which agents' preferences satisfy substitutability and the law of aggregate demand, we present an algorithm to compute the full set of stable matchings. This algorithm relies on the idea of ``cycles in preferences''  and  generalizes the algorithm presented in \cite{roth1992two} for the one-to-one model.  

\bigskip

\noindent \emph{JEL classification:} C78, D47.\bigskip

\noindent \emph{Keywords:} Stable matchings, cyclic matching, substitutable preferences. 

\end{abstract}

\section{Introduction}

In many-to-many matching models, there are two disjoints sets of agents: firms and workers. Each firm wishes to hire a set of workers and each worker wishes to work for a set of firms. Many real-world markets are many-to-many, for instance, the market for medical interns in the UK \citep{roth1992two}, the assignment of teachers to high schools in some countries (35\% of teachers in Argentina work in more than one school). A matching is an assignment of sets of workers to firms, and of sets of firms to workers, so that a firm is assigned to a worker if and only if this worker is also assigned to that firm. In these models, the most studied  solution is the set of stable matchings. 
A matching is stable if all agents are matched to an  acceptable subset of partners and there is no unmatched firm-worker pair, both of which would prefer to add the other to their current subset of partners.\footnote{This notion of stability is known in the literature as ``pairwise stability''.} In their seminal paper, \cite{gale1962college} introduce the Deferred Acceptance (\emph{DA}, from now on) algorithm to show the existence of a stable matching in the one-to-one model.  This algorithm computes the optimal stable matching for one side of the market. Later, the \emph{DA} algorithm  is adapted to the many-to-many case by \cite{roth1984evolution}. 

In this paper, we present an algorithm to compute the full set of many-to-many stable matchings. In the one-to-one model, beginning from a stable matching and through a procedure of reduction of preferences, \cite{roth1992two} define a ``cycle in preferences'' that allows them to generate a new  matching, called a ``cyclic matching'', that turns out to be stable.\footnote{\cite{roth1992two} adapt the algorithm presented in \cite{irving1986complexity}. Cycles are called ``rotations'' in \cite{irving1986complexity} } They present an algorithm that, starting from an optimal stable matching for one side of the market and by constructing all cycles and its corresponding cyclic matchings, computes the full set of one-to-one stable matchings \citep[see][for more details]{irving1986complexity,gusfield1989stable,roth1992two}.  
The purpose of our paper is to extend Roth and Sotomayor's construction to a many-to-many environment.

Our general framework assumes substitutability on all agents' preferences. This condition, first introduced by \cite{kelso1982job}, is the weakest requirement in preferences in order to guarantee the existence of many-to-many stable matchings. An agent has substitutable preferences when she wants to continue being matched to an agent of the other side of the market even if other agents become unavailable.  Given an agent's preference, \cite{blair1988lattice} defines a partial order over subsets of agents of the other side of the market as follows: one subset is  Blair-preferred to another subset if, when all agents of both subsets are available, only the agents of the first subset are chosen.\footnote{Blair's order of an agent is more restrictive than the individual preference of that agent.}
When preferences are substitutable, the set of stable matchings has a lattice structure with respect to the unanimous Blair order for any side of the market.\footnote{For instance, a set of workers is Blair-preferred to another set of workers for the firms if the first set is Blair-preferred to the latter set for each firm.}

In addition to substitutability, we require that agents' preferences satisfy the ``law of aggregate demand" (\emph{LAD}, from now on).\footnote{This property is first studied by \cite{alkan2002class} under the name of ``cardinal monotonicity". See also \cite{hatfield2005matching}.}  This condition says that when an agent chooses from an expanded set, it selects at least as many agents as before. Under these two assumptions on preferences, the set of stable matchings   satisfies the so-called Rural Hospitals Theorem, which states that each agent is matched with the same number of partners in every stable matching. 
Substitutability of preferences and \emph{LAD} ensure that  suitable generalizations of the concepts of ``cycle'' and ``cyclic matching'' can be defined.   To do this, given a substitutable preference profile and two stable matchings that are unanimously Blair-comparable (for one side of the market), we define a ``reduced preference profile'' with respect to these two stable matchings and show that this profile is also substitutable and satisfies \emph{LAD}. 
Next, we adapt Roth and Sotomayor's notion of a cycle for our reduced preference profile and  use this many-to-many notion of a cycle to define a cyclic matching. This new matching turns out to be stable not only for this reduced preference profile but also for the original preference profile. 
With all these ingredients we can describe our algorithm as follows. Given a preference profile, by the \emph{DA} algorithm compute the two optimal stable matchings, one for each side of the market. Pick one side of the market, say the firms' side, and  obtain the reduced preference profile with respect to the firms' optimal and the workers' optimal stable matchings. In each of the following steps, for each  reduced preference profile obtained in the previous step,  compute: (i) each cycle for this profile, (ii) its corresponding cyclic matching, and (iii) the reduced preference profile with respect to this cyclic matching and the worker optimal stable matching. The algorithm stops in the step where all the cyclic matchings computed are equal to the worker optimal stable matching. The firms' optimal stable matching together with all the cyclic matchings obtained by the algorithm encompass the full set of stable matchings.

Several papers calculate the full set of stable matchings in two-sided matching models.  \cite{mcvitie1971stable} are the first to present an algorithm that computes the full set of one-to-one stable matchings. This algorithm starts at the optimal stable matching for one side of the market and then, at each step, breaks some matched pair and applies the \emph{DA} algorithm to the new preference profile in which the broken matched pair is no longer acceptable.  This algorithm is generalized by \cite{martinez2004algorithm} to a many-to-many matching market in which agents' preferences satisfy substitutability. 
However, we provide an example that shows that the algorithm in \cite{martinez2004algorithm} has an error: it stops before computing all stable matchings. We also give an intuition of why this happens.

Following the lines of \cite{irving1986complexity} and \cite{roth1992two}, \cite{bansal2007polynomial} extend the notion of cycle to a many-to-many matching model in which each agent has a strict ordering over individual agents of the other side of the market.\footnote{This setting is equivalent to the one defined by \cite{roth1985college} for the many-to-many model in which firms have responsive preferences over subsets of workers.} Among other results, they use cycles to compute the full set of stable matchings. \cite{eirinakis2012finding} revise and improve the algorithm presented in \cite{bansal2007polynomial}. Moreover, they extend the algorithm for a model in which agents' preferences satisfy the ``max-min criteria''. This criteria establishes that agents rank stable matchings in a responsive manner. However, their assumptions are more restrictive than substitutability over subsets of agents and \emph{LAD}.  For a many-to-one matching model with strict orderings over individual agents \cite{cheng2008unified}, using the notion of cycles introduced by \cite{bansal2007polynomial}, 
show that broad classes of feasibility and optimization stable matching problems can be solved efficiently.

A different approach to compute the full set of stable matchings is presented by \cite{dworczak2021deferred}. For a one-to-one model,  they generalize the \emph{DA} algorithm allowing both sides of the market to make offers in a specific ordering. The paper proposes a generalized \emph{DA} algorithm with ``compensation chains'' and proves that: (i) for each order of the agents, the algorithm obtains a stable matching, 
and (ii) each stable matching can be obtained as the output of the algorithm for some order of the agents.

Our paper is organized as follows. In Section  \ref{seccion de preliminares} we present the preliminaries. The reduction procedure of preferences is presented in Section  \ref{seccion de resuccion de preferencias}. Section \ref{seccion de cyclos y algoritmo} contains the definition of a cycle in preferences together with the algorithm that computes the many-to-many stable set. Concluding remarks are gathered in Section \ref{Concludings}, where the error in \cite{martinez2004algorithm} is discussed. All proofs are relegated to Appendix \ref{appendix}.

\section{Preliminaries}\label{seccion de preliminares}
We consider many-to-many matching markets where there are two disjoint sets of agents: the set of firms $F$ and the set of workers $W$. Each firm $f\in F$ has a strict preference relation $P_f$ over the  set of all subsets of $W$.  Each worker $w\in W$ has a strict preference relation $P_w$ over the  set of all subsets of $F$. 
We denote by $P$ the preference profile for all agents: firms and workers.  A (many-to-many) matching market is denoted by $(F,W,P).$ Since the sets $F$ and $W$ are kept fixed throughout the paper, we often identify the market $(F,W,P)$ with the preference profile $P$.  Given an agent $a\in F\cup W$, a set $S$ in the opposite side of the market is \textbf{acceptable for $\boldsymbol{a$ under $P}$} if $S P_a \emptyset$. A pair $(f,w)\in F\times W$ is \textbf{mutually acceptable under $\boldsymbol{P}$} if $\{f\}$ is acceptable for $w$ under $P$ and $\{w\}$ is acceptable for $f$ under $P$. In this paper, the preference relation $P_a$ is represented by the ordered list of its acceptable sets (from most to least preferred).\footnote{For instance, $ P_{f}:w_1w_2,w_3,w_1,w_2 $ indicate that $ \{ w_1,w_2\} P_f \{ w_3\} P_f\{ w_1\} P_f \{ w_2\} P_f \emptyset $ and $ P_{w}:f_1f_3,f_3,f_1 $ indicates that $ \{f_1,f_3 \} P_w \{ f_3 \} P_w \{ f_1 \} P_w \emptyset$.}
Given a set of workers $W'\subseteq W$ and a firm $f\in F$, let $\boldsymbol{C_f(W')}$ (the choice set for $f$) denote firm $f$'s most preferred subset of $W'$ according to the preference relation $P_f$. 
Symmetrically, given a set of firms $F'\subseteq F$ and a worker $w\in W$, let $\boldsymbol{C_w(F')}$ (the choice set for $w$) denote worker $w$'s most preferred subset of $F'$ according to the preference relation $P_w$.

\begin{definition}
A \textbf{matching} $\mu$ is a function from the set $F\cup W$ into $2^{F\cup W}$ such that for each $w\in W$ and for each $f\in F$:
\begin{enumerate}[(i)]
\item $\mu(w)\subseteq F$,
\item $\mu(f)\subseteq W$,
\item $w\in \mu(f)$ if and only if $f\in \mu(w)$.
\end{enumerate}
\end{definition}
Agent $a\in F\cup W$ is \textbf{matched} if $\mu(a) \neq \emptyset$, otherwise she is \textbf{unmatched}. For the following definitions, fix a preference profile $P$.
A matching $\mu$ is \textbf{blocked by agent $\boldsymbol{a}$} if $\mu(a)\neq C_a(\mu(a))$. A matching is  \textbf{individually rational} if it is not blocked by any individual agent. A matching $\mu$ is \textbf{blocked by a firm-worker pair $\boldsymbol{(f,w)}$} if $w \notin \mu( f ), w \in C_f(\mu( f )\cup \{w\}),$ and $f \in C_w(\mu( w )\cup \{f\})$. A matching $\mu$ is \textbf{stable} if it is not blocked by any individual agent or any firm-worker pair. The set of stable matchings  for a preference profile $P$ is denoted by $\boldsymbol{S(P)}.$ 

Agent $a$'s preference relation satisfies \textbf{substitutability} if, for each subset $S$ of the opposite side of the market (for instance, if $a\in F$ then $S\subseteq W$) that contains agent $b$, $b\in C_a(S)$ implies that $b\in C_a(S'\cup \{b\})$ for each $S' \subseteq S.$ Moreover, if agent $a$'s preference relation is substitutable then it holds that
\begin{equation}\label{propiedad 1}
C_a(S\cup S')=C_a(C_a(S)\cup S')
\end{equation} for each pair of subsets $S$ and $S'$ of the opposite side of the market.\footnote{See Proposition 2.3 in \cite{blair1988lattice}, for more details.}  

 Given a firm $f$, \cite{blair1988lattice} defines a partial order for $f$  over  subsets of workers  as follows: given  firm $f$'s preference relation $P_f$ and two subsets of workers  $S$ and  $S'$, we write $\boldsymbol{S \succeq_f S'}$ whenever $S=C_f(S \cup S')$, and $\boldsymbol{S \succ_f S'}$ whenever $S \succeq_f S'$ and $S \neq S'$. The partial orders $\succeq_w$ and $\succ_w$ for  worker $w$ are defined analogously. Given a preference profile $P$ and two  matchings $\mu$ and  $\mu'$,  we write  $\boldsymbol{\mu \succeq_F \mu'}$ whenever $\mu(f) \succeq_f \mu'(f)$ for each $f\in F$, and we write $\boldsymbol{\mu \succ_F \mu'}$ if, in addition, $\mu \neq \mu'$.\footnote{We call $\succeq_F$ the unanimous Blair order for the firms.}  Similarly, we define $\succeq_W$ and $\succ_W$. 

The set of stable matchings under substitutable preferences is very well structured. \cite{blair1988lattice} proves that this set has  two lattice structures, one with respect to $\succeq_F$ and the other one with respect to $\succeq_W$. Furthermore, it contains two distinctive matchings: the firm-optimal stable matching $\mu_F$ and the worker-optimal stable matching $\mu_W$. The  matching $\mu_F$ is unanimously considered by all firms to be the best among all stable matchings and $\mu_W$ is unanimously considered by all workers to be the best among all stable matchings,  according to the respective Blair's partial orders \cite[see][for more details]{roth1984evolution,blair1988lattice}.

Agent $a$'s preference relation   satisfies the \textbf{ law of aggregate demand (\emph{LAD})} if for all subsets $S$ of the opposite side of the market and all $S'\subseteq S$, $|C_a(S')|\leq |C_a(S)|.$\footnote{$|S|$ denotes the number of agents in $S$.}  When preferences are substitutable and satisfy \emph{LAD},  the  lattices $(S(P),\succeq_F)$ and $(S(P),\succeq_W)$ are dual; that is, $\mu\succeq_F\mu'$ if and only if  $\mu'\succeq_W\mu$ for $\mu,\mu'\in S(P)$. This is known as the ``polarization of interests'' result \citep[see][among others]{alkan2002class,li2014new}.

\section{The reduction procedure}\label{seccion de resuccion de preferencias}
In this section, we present  a reduction procedure that will allow us to define a cycle in preferences, a concept  that is essential  for developing our algorithm. Given a substitutable preference profile and two Blair-comparable (for the firms) stable matchings, this reduction procedure generates a new preference profile, in which the most Blair-preferred stable matching is the firm-optimal matching and the least Blair-preferred stable matching is the worker-optimal matching, for the market identified with this new preference profile. The reduction procedure is described as follows.
Let $\mu $ and $\widetilde{\mu}$ be stable matchings for matching market $(F,W,P)$ such that $\mu \succeq_F \widetilde{\mu}$. 

\begin{enumerate}
\item[\textbf{Step 1:}] 

\begin{enumerate}
[(a)]

\item For each $f\in F$, each $W^{\prime }\subset W$ such that $W^{\prime
}\succ_f \mu (f),$  and each $\tilde{w}\in
W^{\prime }\setminus \mu (f)$, remove each $\widetilde{W}\subset W$ such
that $\tilde{w}\in \widetilde{W}$ from $f$'s list of acceptable sets of
workers.

\item For each $w\in W$, each $F^{\prime }\subset F$ such that $F^{\prime
}\succ_w \widetilde{\mu}(w),$ and each $\tilde{f}\in
F^{\prime }\setminus \widetilde{\mu}(w)$, remove each $\widetilde{F}\subset F$ such
that $\tilde{f}\in \widetilde{F}$ from $w$'s list of acceptable sets of
firms. 
\end{enumerate}

\item[\textbf{Step 2:}] 

\begin{enumerate}[(a)]

\item For each $f\in F$, each $W^{\prime }\subset W$ such that $\widetilde{\mu}(f)\succ_f W',$ and each $\tilde{w}%
\in W^{\prime }\setminus \widetilde{\mu}(f)$, remove each $\widetilde{W}\subset W$
such that $\tilde{w}\in \widetilde{W}$ from $f$'s list of acceptable sets of
workers.

\item For each $w\in W$, each $F^{\prime }\subset F$ such that $\mu
(w)\succ_w F^{\prime },$ and each $\tilde{f}\in
F^{\prime }\setminus \mu (w)$, remove each $\widetilde{F}\subset F$ such
that $\tilde{f}\in \widetilde{F}$ from $w$'s list of acceptable sets of
firms.
\end{enumerate}

\item[\textbf{Step 3:}] After Steps 1 and 2 are performed, if  $f$ is not acceptable for $w$ (that is, if $\{f\}$ is not on $w$'s preference list as now modified), remove each $%
W^{\prime }\subset W$ such that $w\in W^{\prime }$ from $f$'s list of
acceptable sets of workers. If  $w$
is not acceptable for $f$ (that is, if $\{w\}$ is not on $f$'s preference list as now modified), remove each $F^{\prime }\subset F$ such that $%
f\in F^{\prime }$ from $w$'s list of acceptable sets of firms.
\end{enumerate}

\noindent The  profile obtained by this procedure  is called the \textbf{reduced preference profile with respect to $\boldsymbol{\mu$ and $\widetilde{\mu}}$}, and is denoted by $\boldsymbol{P^{\mu,\widetilde{\mu}}}$. When $\widetilde{\mu}=\mu_W,$ the  profile is simply called the \textbf{reduced preference profile with respect to $\boldsymbol{\mu}$}, and is denoted by $\boldsymbol{P^{\mu}}$.

Let us put in words how the reduction procedure works. In Step 1 (a), for each $f\in F$, if a worker is not in $\mu(f)$ but belongs to  a subset that is Blair-preferred to $\mu(f)$, the procedure eliminates each subset that contains this worker from firm $f$'s list of acceptable subsets. Step 1 (b) performs an analogous elimination in each worker's preference list. In Step 2 (a), for each $f\in F$,  if a worker is not in $\widetilde{\mu}(f)$ and $\widetilde{\mu}(f)$ is Blair-preferred to a subset that includes this worker, the procedure eliminates each subset that contains this worker from firm $f$'s list of acceptable subsets. Step 2 (b) performs an analogous elimination in each worker's preference list. In Step 3, after Step 1 and Step 2  are performed, the procedure eliminates all subsets of agents needed in order to make all pairs of agents mutually acceptable.

 By  $\boldsymbol{C^{\mu,\widetilde{\mu}}_f(W')}$ we denote the firm $f$'s most preferred subset of $W'$ according to the preference relation $P^{\mu,\widetilde{\mu}}_f$. Similar notation is used for the choice sets according to the preference relations $P^{\mu,\widetilde{\mu}}_w$, $P^{\mu}_f$, and $P^{\mu}_w$. Some remarks on the reduced preference relations
are in order.

\begin{remark}
\label{remark de reduccion M-M} 
Let $P$ be a market and assume $\mu,\widetilde{\mu}\in S(P)$. Then the following statements hold.
\begin{enumerate}[(i)]

\item $\mu \left( f\right) $ is the most preferred subset of workers in $f$%
's reduced preference relation (i.e. $\mu \left( f\right) =C_{f}^{\mu,\widetilde{\mu} }(W)$) and $%
\widetilde{\mu}(w)$ is the most preferred subset of firms in $w$'s reduced preference relation (i.e.  $\widetilde{\mu}\left( w\right) =C_{w}^{\mu,\widetilde{\mu} }(F)$). 

\item $\mu $ is the firm--optimal stable matching under $P^{\mu,\widetilde{\mu} }$ and $%
\widetilde{\mu}$ is the worker--optimal stable matching under $P^{\mu,\widetilde{\mu} }$.
Furthermore, $\widetilde{\mu}$ is the firm--pessimal stable matching under $P^{\mu,\widetilde{\mu} }$ and $\mu $ is the worker--pessimal stable matching under $P^{\mu,\widetilde{\mu} }
$.

\item $f$ is acceptable to $w$ if and only if $w$ is acceptable to $f$ under 
$P^{\mu,\widetilde{\mu} }.$ 



\end{enumerate}
\end{remark}

The following lemma states that the properties
of substitutability and \emph{LAD} are preserved by the reduction procedure.

\begin{lemma}\label{preferenceias resucidas sustituibles}
Let $\mu,\widetilde{\mu}\in S(P)$ and  $a \in F\cup W$. If $P_a$ is substitutable and satisfies LAD, then the reduced preference relation $P^{\mu,\widetilde{\mu}}_{a}$ is substitutable and satisfies LAD.
\end{lemma}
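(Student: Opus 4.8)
The plan is to reduce everything to the behavior of the choice functions $C_a$ on the surviving acceptable sets, and to describe explicitly, for an agent $a\in F\cup W$, which sets survive the reduction. Fix a firm $f$ (the worker case is symmetric, with the roles of $\mu$ and $\widetilde{\mu}$ interchanged, since Step~1(b) and Step~2(b) treat $w$ with $\widetilde{\mu}$ and $\mu$ swapped). A subset $W'\subseteq W$ survives Steps~1(a) and 2(a) if and only if $W'$ contains no worker $\tilde w$ that is ``forbidden'' for $f$, where a worker is forbidden precisely when $\tilde w\notin\mu(f)$ and $\tilde w$ belongs to some $W''$ with $W''\succ_f\mu(f)$, or $\tilde w\notin\widetilde{\mu}(f)$ and $\tilde w$ belongs to some $W''$ with $\widetilde{\mu}(f)\succ_f W''$. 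Step~3 removes, in addition, all $W'$ containing a worker $w$ for whom $f$ is no longer acceptable; call the union of these deletions the forbidden set $B_f\subseteq W$. The key structural observation to establish first is that $B_f\cap\mu(f)=\emptyset$ and $B_f\cap\widetilde{\mu}(f)=\emptyset$, so that $\mu(f)$ and $\widetilde{\mu}(f)$ themselves survive; this uses that $\mu,\widetilde{\mu}\in S(P)$ (individual rationality gives $f\in C_w(\mu(w)\cup\{f\})$ etc., which is what keeps mutual acceptability of matched pairs intact in Step~3) together with Remark~\ref{remark de reduccion M-M}(iii).

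\textbf{Reduction to a restricted choice function.} Once $B_f$ is identified, observe that for \emph{every} set $W'$ that survives the reduction we have $P^{\mu,\widetilde{\mu}}_f$ agreeing with $P_f$ on the relative ranking of surviving sets (the reduction only deletes sets, never reorders them). Therefore, for any $W'\subseteq W$,
\begin{equation}\label{eq:choice-restriction}
C^{\mu,\widetilde{\mu}}_f(W')=C_f\bigl(W'\setminus B_f\bigr),
\end{equation}
provided one checks that $C_f(W'\setminus B_f)$ is itself a surviving set. That check is the crux: I would argue that if $W''\succ_f\mu(f)$ then $W''$ contains a forbidden worker (by definition of $B_f$, any worker of $W''$ outside $\mu(f)$ is forbidden, and $W''\neq\mu(f)$ because $W''\succ_f\mu(f)$ strictly unless $W''=\mu(f)$), hence $W''$ is deleted; and if $\widetilde{\mu}(f)\succ_f W''$ then likewise $W''$ is deleted unless $W''\subseteq\widetilde{\mu}(f)$. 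The surviving sets are thus exactly those $W'$ with $W'\cap B_f=\emptyset$, and among those $C_f$ never picks a forbidden worker (there are none), so $C_f(W'\setminus B_f)=C_f(W'\setminus B_f)$ lands in the surviving family. Granting \eqref{eq:choice-restriction}, substitutability and LAD transfer immediately: if $b\in C^{\mu,\widetilde{\mu}}_f(S)=C_f(S\setminus B_f)$ and $S'\subseteq S$, then $(S'\cup\{b\})\setminus B_f\subseteq S\setminus B_f$ with $b\in(S'\cup\{b\})\setminus B_f$ (note $b\notin B_f$ since $b$ was chosen), so substitutability of $P_f$ gives $b\in C_f((S'\cup\{b\})\setminus B_f)=C^{\mu,\widetilde{\mu}}_f(S'\cup\{b\})$; and $|C^{\mu,\widetilde{\mu}}_f(S')|=|C_f(S'\setminus B_f)|\le|C_f(S\setminus B_f)|=|C^{\mu,\widetilde{\mu}}_f(S)|$ by LAD of $P_f$ applied to $S'\setminus B_f\subseteq S\setminus B_f$.

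\textbf{Main obstacle.} The delicate point is not the final transfer of substitutability and LAD — that is a one-line consequence of \eqref{eq:choice-restriction} — but rather proving \eqref{eq:choice-restriction} itself, i.e. that the reduced preference relation is \emph{exactly} the original one restricted to subsets of $W\setminus B_f$ with no reordering and no extra survivors. This requires a careful case analysis of which sets Steps~1--3 delete, and in particular showing the two-sided interaction in Step~3 does not cascade: deleting sets from $f$'s list can make some worker $w$ unacceptable to $f$, which via Step~3 removes sets from $w$'s list, but one must verify this does not in turn strip $\widetilde{\mu}(w)$ of acceptability (which would break Remark~\ref{remark de reduccion M-M}(i) and the whole scheme). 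Here I would lean on $\mu,\widetilde{\mu}\in S(P)$: every matched pair $(f,w)$ with $w\in\mu(f)$ or $w\in\widetilde{\mu}(f)$ stays mutually acceptable because matched workers are never forbidden, so Step~3 only severs pairs that were already unmatched in both $\mu$ and $\widetilde{\mu}$, and the deletion terminates after one pass. A secondary nuisance is bookkeeping the asymmetry between the firm side and the worker side (firms are ``anchored'' at $\mu$, workers at $\widetilde{\mu}$), but this is purely notational once the firm case is done.
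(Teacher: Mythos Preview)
Your approach is correct and, at the technical core, coincides with the paper's, though the paper packages it more modularly. The paper proceeds by (i) defining the $w$-truncation $P_f^w$ of $P_f$ (remove every set containing $w$), (ii) proving in an auxiliary lemma that a single $w$-truncation preserves substitutability and LAD --- using exactly the identity $C_f^w(S)=C_f(S\setminus\{w\})$, which is the one-worker instance of your formula $C_f^{\mu,\widetilde\mu}(W')=C_f(W'\setminus B_f)$ --- and (iii) observing that $P_f^{\mu,\widetilde\mu}$ is the composition of finitely many such truncations, one for each worker in the set $\widetilde W_f$ singled out by Steps~1(a), 2(a) and 3. Iterating (ii) finishes the proof. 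Your one-shot restriction is the same argument compressed into a single step.

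Where you diverge from the paper is in how much ancillary work you believe is required. The lemma does not depend on knowing \emph{which} workers land in $B_f$, only on the fact that the reduction deletes exactly the subsets meeting some fixed $B_f\subseteq W$ and reorders nothing; this is immediate from the description of Steps~1--3, each of which has the form ``for certain workers $\tilde w$, delete every set containing $\tilde w$.'' Consequently your ``key structural observation'' that $B_f\cap\mu(f)=\emptyset=B_f\cap\widetilde\mu(f)$ and your ``main obstacle'' about Step~3 cascading are red herrings here: they matter for Remark~\ref{remark de reduccion M-M} and for later results, but not for Lemma~\ref{preferenceias resucidas sustituibles}. Likewise, the check that $C_f(W'\setminus B_f)$ itself survives is trivial once ``surviving'' means ``disjoint from $B_f$,'' since $C_f(W'\setminus B_f)\subseteq W'\setminus B_f$. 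The paper's iterative formulation makes this modularity transparent --- substitutability and LAD are preserved regardless of which workers are truncated --- which is precisely why it sidesteps the complications you raise.
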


The following example illustrates the reduction procedure for a matching market.

\begin{example}\label{Ejemplo 1}
Let $(F,W,P)$ be a matching market where $F=\{f_1,f_2,f_3\}$, $W=\{w_1,w_2,w_3,w_4,$ $w_5,w_6\}$, and  the preference profile is given by:
\medskip

\noindent\begin{tabular}{l}
$P_{f_1}:w_1w_2,w_1w_5,w_2w_5,w_1w_3,w_4w_5,w_2w_4,w_1w_4,w_3w_4,w_3w_5,
w_2w_3,w_1,w_4,w_3,w_2,w_5$\\
$P_{f_2}:w_3w_6,w_3w_5,w_5w_6,w_2w_5,w_1w_3,w_2w_6,w_1w_5,w_1w_2,w_2w_3,w_1w_6,w_1,w_2,w_3,w_5,w_6$\\
$P_{f_3}:w_2w_4,w_1w_2,w_3w_4,w_2w_3,w_1w_3,w_1w_4,w_1,w_2,w_3,w_4$\\
$ P_{w_1}:f_3,f_1,f_2$\\
$ P_{w_2}:f_2f_3,f_1f_3,f_1f_2,f_1,f_2,f_3$\\
$ P_{w_3}:f_1,f_2$\\
$ P_{w_4}:f_1, f_3,f_2$\\
$ P_{w_5}:f_2,f_3$\\
$ P_{w_6}:f_1f_3,f_3,f_1 $\\
\end{tabular}\medskip

\noindent It is easy to check that these preference relations are substitutable and satisfy LAD. By the DA algorithm, we obtain the two optimal stable matchings: \medskip

$
\mu_F=\begin{pmatrix}

f_1 & f_2 & f_3 & \emptyset  \\
 w_1w_2& w_3w_5 &  w_2w_4& w_6 \\
 
\end{pmatrix}
$
 and $
\mu_W=\begin{pmatrix}

f_1 & f_2 & f_3 & \emptyset \\
 w_3w_4& w_2w_5 &  w_1w_2& w_6 \\
 
\end{pmatrix}.$
\bigskip

\noindent \medskip Now, after the reduction procedure is performed, we obtain the reduced preference profile with respect to $\mu_F$, $P^{\mu_F}$:\footnote{Notice that the subsets assigned in the optimal stable matchings are in bold.}
 
 \noindent\begin{tabular}{l}
$P^{\mu_F}_{f_1}:\boldsymbol{w_1w_2},w_1w_3,w_2w_4,w_1w_4,\boldsymbol{w_3w_4},w_2w_3,w_1,w_4,w_3,w_2$\\
$P^{\mu_F}_{f_2}:\boldsymbol{w_3w_5},\boldsymbol{w_2w_5},w_2w_3,w_2,w_3,w_5$\\
$P^{\mu_F}_{f_3}:\boldsymbol{w_2w_4},\boldsymbol{w_1w_2},w_1w_4,w_1,w_2,w_3,w_4$\\
 $P^{\mu_F}_{w_1}:\boldsymbol{f_3},\boldsymbol{ f_1}$\\
$P^{\mu_F}_{w_2}:\boldsymbol{f_2f_3},\boldsymbol{f_1f_3},f_1f_2,f_1,f_2,f_3$\\
$P^{\mu_F}_{w_3}:\boldsymbol{f_1},\boldsymbol{f_2}$\\
$P^{\mu_F}_{w_4}:\boldsymbol{f_1},\boldsymbol{f_3}$\\
$P^{\mu_F}_{w_5}:\boldsymbol{f_2}$\\
$P^{\mu_F}_{w_6}:\boldsymbol{\emptyset}$\\
\end{tabular}\medskip
 
\noindent In order to show how each stage of the procedure works, we turn our attention to  preferences  $ P_{f_1}  $ and $ P_{f_2} $. At Step 1 of the reduction procedure we remove the following subsets of agents: \medskip
 
 \noindent\begin{tabular}{l}
$P_{f_1}:\boldsymbol{w_1w_2},w_1w_5,w_2w_5,w_1w_3,w_4w_5,w_2w_4,w_1w_4,\boldsymbol{w_3w_4},w_3w_5,
w_2w_3,w_1,w_4,w_3,w_2,w_5$\\
$P_{f_2}:\cancel{w_3w_6},\boldsymbol{w_3w_5},\cancel{w_5w_6},\boldsymbol{w_2w_5},w_1w_3,\cancel{w_2w_6},w_1w_5,w_1w_2,w_2w_3,\cancel{w_1w_6},w_1,w_2,w_3,w_5,\cancel{w_6}.$\\
\end{tabular}\bigskip

\noindent At Step 2 of the reduction procedure we remove the following subsets of agents: \bigskip

\noindent\begin{tabular}{l}
$P_{f_1}:\boldsymbol{w_1w_2},w_1w_5,w_2w_5,w_1w_3,w_4w_5,w_2w_4,w_1w_4,\boldsymbol{w_3w_4},w_3w_5,
w_2w_3,w_1,w_4,w_3,w_2,w_5$\\
$P_{f_2}:\boldsymbol{w_3w_5},\boldsymbol{w_2w_5},\bcancel{w_1w_3},\bcancel{w_1w_5},\bcancel{w_1w_2},w_2w_3,\bcancel{w_1},w_2,w_3,w_5.$\\
\end{tabular}\bigskip

\noindent Since $ f_1 $ is not acceptable for $ w_5 $ at the original preferences, $ f_1 $ is not acceptable for $ w_5 $ after Steps 1 and 2 are performed. So at Step 3 we remove the following subsets of agents:\bigskip

\noindent\begin{tabular}{l}
$P_{f_1}:\boldsymbol{w_1w_2},\xcancel{w_1w_5},\xcancel{w_2w_5},w_1w_3,\xcancel{w_4w_5},w_2w_4,w_1w_4,\boldsymbol{w_3w_4},\xcancel{w_3w_5},
w_2w_3,w_1,w_4,w_3,w_2,\xcancel{w_5}$\\
$P_{f_2}:\boldsymbol{w_3w_5},\boldsymbol{w_2w_5},w_2w_3,w_2,w_3,w_5.$\\
\end{tabular}\bigskip

\noindent In this way we obtain $ P^{\mu_F}_{f_1}  $ and $ P^{\mu_F}_{f_2} $.
 \hfill $\Diamond$
\end{example}

\bigskip


The following theorem states that the stability of a matching is preserved by the reduction procedure and that there are no new stable matchings for the reduced preference profile. This means that  a stable matching in the original preference profile is in between (according to Blair's partial order) of the two stable matchings used to generate the reduced preference profile if and only if it is also stable in the reduced preference profile.\footnote{Recall that $\succeq_F$ and $\succeq_W$ are dual orders only in the set of stable matchings.} An important fact about this theorem (and its corollary) is that \emph{LAD} is not needed to obtain it.\footnote{In this paper there are only three results in which \emph{LAD} is not needed: Theorem \ref{estable original sii estable en el reducido M-to-M}, Corollary \ref{corolario 1}, and Lemma \ref{individual dentro de mu y mu tilde}. }

\begin{theorem}\label{estable original sii estable en el reducido M-to-M}
Let $\mu,\widetilde{\mu}\in S(P)$ with $\mu\succeq_{F}\widetilde{\mu}$. Then, $\mu'\in S(P)$ and  $\mu\succeq_{F} \mu' \succeq_{F}\widetilde{\mu}$ if and only if  $\mu' \in S(P^{\mu,\widetilde{\mu}}).$
\end{theorem}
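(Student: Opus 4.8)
The statement is an ``if and only if'' between membership in $S(P)$ together with the Blair-betweenness condition $\mu \succeq_F \mu' \succeq_F \widetilde\mu$, and membership in $S(P^{\mu,\widetilde\mu})$. I would prove the two implications separately, and in each direction the crux is to translate a blocking pair (or individual block) in one market into a blocking pair (or individual block) in the other, using the precise list of sets removed in Steps 1--3 of the reduction procedure.

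\textbf{Direction 1: if $\mu' \in S(P)$ with $\mu \succeq_F \mu' \succeq_F \widetilde\mu$, then $\mu' \in S(P^{\mu,\widetilde\mu})$.} First I would check individual rationality under $P^{\mu,\widetilde\mu}$: I need to show that for each agent $a$, the set $\mu'(a)$ has not been deleted from $a$'s reduced list. For a firm $f$, the sets deleted in Step 1(a) are those containing some $\tilde w \in W' \setminus \mu(f)$ for $W' \succ_f \mu(f)$; if $\mu'(f)$ contained such a $\tilde w$ then, since $\mu'(f) \succeq_f \widetilde\mu(f)$ need not directly help, I would instead use $\mu(f) \succeq_f \mu'(f)$ together with substitutability and the characterization of $\succeq_f$ to derive a contradiction with stability of $\mu'$ in $P$ (intuitively, $f$ and $\tilde w$ would block $\mu'$). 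The Step 2(a) deletions are handled symmetrically using $\mu'(f) \succeq_f \widetilde\mu(f)$, and Step 3 deletions are handled because $\mu'$ is individually rational and its matched pairs are mutually acceptable under $P$, hence survive. Once individual rationality in $P^{\mu,\widetilde\mu}$ is established, I would rule out a blocking pair $(f,w)$ in $P^{\mu,\widetilde\mu}$: since the reduced preferences are ``coarsenings'' of the original ones (every surviving set keeps its relative order), a block under $P^{\mu,\widetilde\mu}$ together with the choice-set identity \eqref{propiedad 1} and substitutability (Lemma \ref{preferenceias resucidas sustituibles}, giving us that $C^{\mu,\widetilde\mu}$ behaves well) should produce a block under $P$, contradicting $\mu' \in S(P)$. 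The care needed here is that $C_f(\mu'(f) \cup \{w\})$ computed under $P$ versus under $P^{\mu,\widetilde\mu}$ may differ; I would argue that the relevant comparison set is not among the deleted ones, using that $\mu'(f)$ itself survived.

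\textbf{Direction 2: if $\mu' \in S(P^{\mu,\widetilde\mu})$, then $\mu' \in S(P)$ and $\mu \succeq_F \mu' \succeq_F \widetilde\mu$.} By Remark \ref{remark de reduccion M-M}(ii), $\mu$ is firm-optimal and $\widetilde\mu$ is firm-pessimal in $S(P^{\mu,\widetilde\mu})$, so $\mu \succeq_F \mu' \succeq_F \widetilde\mu$ holds in the reduced market; the issue is to transfer these Blair inequalities and stability back to $P$. Stability in $P$ is the harder half: a blocking pair $(f,w)$ under $P$ must involve, by individual rationality of $\mu'$ under $P^{\mu,\widetilde\mu}$ (which I would first lift to individual rationality under $P$ by showing $C_a^{\mu,\widetilde\mu}(\mu'(a)) = C_a(\mu'(a))$ when $\mu'(a)$ survives), a pair where $w \in C_f(\mu'(f) \cup \{w\})$ under $P$ but the set $\mu'(f) \cup \{w\}$ or the chosen subset was deleted in the reduction. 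I would then show that any such $w$ and any such set must have been deleted in Step 1 or Step 2, which by the definition of those steps forces either $w$ to lie in a set Blair-above $\mu(f)$ or Blair-below $\widetilde\mu(f)$ in a way incompatible with $(f,w)$ simultaneously satisfying $f \in C_w(\mu'(w) \cup \{f\})$ — here I would invoke stability of $\mu$ and $\widetilde\mu$ in $P$, and the polarization/duality of $\succeq_F$ and $\succeq_W$ on $S(P)$, to close the contradiction. The Blair-betweenness lifting from the reduced market to $P$ uses that $\succeq_f$ is determined by $C_f$ and that $C_f^{\mu,\widetilde\mu}$ and $C_f$ agree on surviving sets.

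\textbf{Main obstacle.} I expect the technical heart to be the careful bookkeeping of \emph{which} sets are deleted at each step and showing that the choice function is unaffected on the sets that matter — i.e., proving lemmas of the form ``$\mu'(f)$ survives the reduction'' and ``$C_f^{\mu,\widetilde\mu}$ agrees with $C_f$ on $\mu'(f) \cup \{w\}$ for the relevant $w$.'' Equivalently, the subtlety is that a blocking pair uses a choice set over a two-element-augmented set, and I must guarantee that reduction did not prune the winner of that local comparison; this is where substitutability (via \eqref{propiedad 1}) and the stability of the two anchor matchings $\mu,\widetilde\mu$ do the real work, and it is reassuring — as the paper notes — that \emph{LAD} is not needed, which suggests the argument is purely order-theoretic.
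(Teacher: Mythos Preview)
Your overall architecture is the same as the paper's: factor the proof into an individual-rationality bridge (the paper isolates this as Lemma~\ref{individual dentro de mu y mu tilde}) and a blocking-pair bridge, with the hard work in the $(\Longleftarrow)$ direction being a case analysis on which reduction step could have deleted the choice set $C_f(\mu'(f)\cup\{w\})$, resolved by appealing to stability of the anchor matchings $\mu$ and $\widetilde\mu$. Your identification of the ``main obstacle'' is accurate, and your observation that \emph{LAD} is not needed matches the paper.

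There is, however, a concrete error in your Direction~1 sketch. When you argue that $\mu'(f)$ survives Step~1(a), you suppose $\tilde w \in \mu'(f)$ is a worker targeted for deletion and write that ``$f$ and $\tilde w$ would block $\mu'$''. This cannot work: $\tilde w \in \mu'(f)$ means the pair is already matched under $\mu'$, so it can never block $\mu'$. The correct contradiction is with stability of the \emph{anchor} $\mu$, not of $\mu'$. From $\tilde w \in W'\setminus\mu(f)$ with $W' \succ_f \mu(f)$ one gets, by substitutability, $\tilde w \in C_f(\mu(f)\cup\{\tilde w\})$; then one uses the worker-side ordering $\mu' \succeq_W \mu$ (which the paper invokes via polarization in the $(\Longrightarrow)$ direction, and which is built into the hypothesis of Lemma~\ref{individual dentro de mu y mu tilde}) together with $f \in \mu'(\tilde w)$ to obtain $f \in C_{\tilde w}(\mu(\tilde w)\cup\{f\})$, so $(f,\tilde w)$ blocks $\mu$. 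The symmetric Step~2(a) argument likewise uses stability of $\widetilde\mu$, not of $\mu'$. Once you fix this, the rest of your plan goes through and coincides with the paper's proof.
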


Notice that by optimality of $\mu_F$ and $\mu_W$, any stable matching $\mu\in S(P)$ satisfies $\mu_F\succeq_{F} \mu$ and $\mu_W\succeq_W \mu$. Furthermore, by the polarization of interests, $\mu \succeq_F \mu_W.$ Then, $\mu_F \succeq_{F} \mu \succeq_F \mu_W$. Thus, as a consequence of  Theorem \ref{estable original sii estable en el reducido M-to-M} we can state the following corollary.

\begin{corollary}\label{corolario 1}
$S(P)=S(P^{\mu_{F}})$.
\end{corollary}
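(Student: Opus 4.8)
The plan is to derive Corollary \ref{corolario 1} directly from Theorem \ref{estable original sii estable en el reducido M-to-M} by specializing to the pair $(\mu_F,\mu_W)$, so the bulk of the work is just checking that this pair is admissible and that the resulting ``in-between'' condition is vacuous. First I would recall that the reduced profile $P^{\mu_F}$ is by definition $P^{\mu_F,\mu_W}$, so Theorem \ref{estable original sii estable en el reducido M-to-M} applies verbatim once we know $\mu_F\succeq_F\mu_W$. That hypothesis holds because $\mu_F$ is the firm-optimal stable matching, so $\mu_F\succeq_F\mu'$ for every $\mu'\in S(P)$, and in particular for $\mu'=\mu_W$.

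Next I would establish the two set inclusions. For $S(P)\subseteq S(P^{\mu_F})$: take any $\mu'\in S(P)$; by optimality of $\mu_F$ we have $\mu_F\succeq_F\mu'$, and by optimality of $\mu_W$ together with the polarization of interests (duality of $\succeq_F$ and $\succeq_W$ on $S(P)$ under substitutability and \emph{LAD}) we get $\mu_W\succeq_W\mu'$, hence $\mu'\succeq_F\mu_W$. Thus $\mu_F\succeq_F\mu'\succeq_F\mu_W$, and the ``only if'' direction of Theorem \ref{estable original sii estable en el reducido M-to-M} yields $\mu'\in S(P^{\mu_F})$. For the reverse inclusion $S(P^{\mu_F})\subseteq S(P)$: take any $\mu'\in S(P^{\mu_F})$; the ``if'' direction of the theorem immediately gives $\mu'\in S(P)$ (in fact with $\mu_F\succeq_F\mu'\succeq_F\mu_W$, but we only need membership). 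Combining the two inclusions gives $S(P)=S(P^{\mu_F})$.

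I do not expect any real obstacle here; the corollary is a pure specialization. The only point that requires care is the invocation of the polarization of interests to convert $\mu_W\succeq_W\mu'$ into $\mu'\succeq_F\mu_W$ — this is exactly where the corollary, unlike Theorem \ref{estable original sii estable en el reducido M-to-M} itself, implicitly leans on \emph{LAD} (the theorem is \emph{LAD}-free, but identifying the full set $S(P)$ with the Blair-interval $[\mu_W,\mu_F]$ is not). I would state this dependence explicitly so the reader understands why the corollary is not listed among the \emph{LAD}-free results. With that remark in place, the proof is three short lines, as already sketched in the paragraph preceding the corollary in the main text.
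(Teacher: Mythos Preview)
Your argument is exactly the paper's: the paragraph immediately preceding the corollary observes that optimality of $\mu_F$ gives $\mu_F\succeq_F\mu$, optimality of $\mu_W$ gives $\mu_W\succeq_W\mu$, and polarization of interests converts the latter into $\mu\succeq_F\mu_W$, after which Theorem~\ref{estable original sii estable en el reducido M-to-M} finishes both inclusions. So the mathematical content of your proposal matches the paper's own derivation line for line.

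The one point of divergence is your meta-commentary about \emph{LAD}. You assert that the corollary ``is not listed among the \emph{LAD}-free results,'' but the paper says the opposite: the sentence before Theorem~\ref{estable original sii estable en el reducido M-to-M} and its footnote explicitly include Corollary~\ref{corolario 1} among the three results for which \emph{LAD} is not needed. You are right that the argument \emph{as written} (both yours and the paper's) invokes polarization of interests, which in the preliminaries is stated under substitutability \emph{and} \emph{LAD}; so there is a tension between the paper's claim and the justification it actually gives. Whether one can bypass \emph{LAD} here (e.g.\ by arguing directly that $\mu_W$ is firm-pessimal under substitutability alone) is a separate question, but you should not state that the paper excludes the corollary from the \emph{LAD}-free list when in fact it includes it.
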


\section{Cycles and Algorithm}\label{seccion de cyclos y algoritmo}

In this section, we present the algorithm to compute the full set of many-to-many stable matchings. First, we introduce  its key ingredients: the notion of a cycle in preferences and its corresponding cyclic matching. From now on,   we assume that the preferences of all agents are substitutable and satisfy \emph{LAD}.

\subsection{Cycles and cyclic matchings}\label{subseccion de ciclos y matching ciclicos}

In the one-to-one model, \cite{roth1992two} present  the notion of a cycle in preferences.\footnote{\cite{roth1992two} adapt  the notion of \emph{rotation} presented in  \cite{irving1986complexity}, and refer to it as a cycle in preferences.} Their construction can be roughly explained as follows. Consider a one-to-one matching market  $(M,W,P)$ and a stable matching $\mu \in S(P).$  A reduced preference profile with respect to $\mu$ and the worker-optimal stable matching $\mu_W,$ say $P^{\mu, \mu_W},$ is obtained. The important facts about this reduced profile are that: (i) $\mu(f)$ is $f$'s most preferred partner and $\mu_W(f)$  is $f$'s least preferred partner according to $P^{\mu, \mu_W}_f$, for each $f \in F$; and (ii)    $\mu_W(w)$ is $w$'s most preferred partner and $\mu(w)$  is $w$'s least preferred partner according to $P^{\mu, \mu_W}_w$, for each $w \in W.$    A  cycle for $P^{\mu, \mu_W}$ in the one-to-one model can be seen as an ordered  sequence of worker-firm pairs $\{(w_{1},f_{1}),(w_{2},f_{2}),\ldots,(w_r ,f_{r})\}$ such that $w_i=\mu(f_i)$ 
 and  $w_{i+1}$ is $f_i$'s  most-preferred worker of $W \setminus \{w_i\}$ according to $P^{\mu, \mu_W}_{f_i}.$ Our definition of a cycle generalizes this idea to the many-to-many environment. Formally,

\begin{definition}\label{defino cyclo}
Let $\mu,\widetilde{\mu}\in S(P)$ with $\mu \succ_F \widetilde{\mu}.$ A \textbf{cycle $\boldsymbol{\sigma$ for $P^{\mu,\widetilde{\mu}}}$} is an ordered  sequence of worker-firm pairs $\sigma=\{(w_{1},f_{1}),(w_{2},f_{2}),\ldots,(w_r ,f_{r})\}$  such that, for $i=1,\ldots,r,$ we have:
\begin{enumerate}[(i)]
\item $w_i\in \mu(f_i)\setminus \widetilde{\mu}(f_{i})$, 
\item $C^{\mu,\widetilde{\mu}}_{f_i}(W \setminus\{w_{i}\}) =\left(\mu( f_i) \setminus \{w_i\}\right)\cup \{w_{i+1}\}$, with $w_{r+1}=w_{1},$ and
\item  $C^{\mu,\widetilde{\mu}}_{w_i}(\mu(w_i) \cup \{f_{i-1}\}) =\left(\mu(w_i) \setminus \{f_i\}\right)\cup \{f_{i-1}\}$, with $f_{0}=f_{r}.$
\end{enumerate} 
\end{definition}
Condition (i) states that worker $w_i$ is matched with $f_i$ under $\mu$ but not under $\widetilde{\mu}.$ Condition (ii) states that the set obtained from $\mu(f_i)$ by eliminating worker $w_i$ and adding worker $w_{i+1}$ is the most Blair-preferred subset of workers of $W\setminus \{w_i\}$ that  contains $w_{i+1}$, according to $P^{\mu, \widetilde{\mu}}_{f_i}.$ Condition (iii) mimics Condition (ii) for  the other side of the market: it states that the set obtained from $\mu(w_i)$ by eliminating firm $f_i$ and adding firm $f_{i-1}$ is the least Blair-preferred subset of firms among those that are Blair-preferred to $\mu(w)$  and contains $f_{i-1}$, according to $P^{\mu, \widetilde{\mu}}_{w_i}.$ Notice that Condition (iii) is not needed in the one-to-one model.

In the rest of this section, we state four propositions that are essential to show that the algorithm computes the full set of stable matchings. All the proofs are relegated to the appendix.
The following proposition  gives a necessary and sufficient condition for the existence of a cycle in a reduced preference profile.
\begin{proposition}\label{existencia de ciclo} 
Let $\mu, \widetilde{\mu} \in S(P )$ with $\mu\succeq_{F}\widetilde{\mu}$. There is a cycle  for  $P^{\mu,\widetilde{\mu}}$ if and only if $%
\mu \neq \widetilde{\mu}$.
\end{proposition}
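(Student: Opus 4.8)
The plan is to prove both implications, the easy one being that if $\mu=\widetilde\mu$ then there can be no cycle, since condition (i) of Definition \ref{defino cyclo} requires a worker $w_i\in\mu(f_i)\setminus\widetilde\mu(f_i)$, and this set is empty when $\mu=\widetilde\mu$. The substance is in the converse: assuming $\mu\succ_F\widetilde\mu$, I must exhibit a cycle for $P^{\mu,\widetilde\mu}$. First I would fix a firm $f_1$ with $\mu(f_1)\neq\widetilde\mu(f_1)$; by the Rural Hospitals Theorem (a consequence of substitutability and \emph{LAD}) we have $|\mu(f_1)|=|\widetilde\mu(f_1)|$, and since $\mu(f_1)\succ_{f_1}\widetilde\mu(f_1)$ these sets differ, so there is some $w_1\in\mu(f_1)\setminus\widetilde\mu(f_1)$. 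This gives the first pair $(w_1,f_1)$ satisfying (i).

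The core of the argument is an iterative construction: given $(w_i,f_i)$ with $w_i\in\mu(f_i)\setminus\widetilde\mu(f_i)$, I would use Remark \ref{remark de reduccion M-M}(i)–(ii) (in $P^{\mu,\widetilde\mu}_{f_i}$, $\mu(f_i)$ is the top set and $\widetilde\mu(f_i)$ is the bottom stable set) together with substitutability to compute $C^{\mu,\widetilde\mu}_{f_i}(W\setminus\{w_i\})$ and argue it has the form $(\mu(f_i)\setminus\{w_i\})\cup\{w_{i+1}\}$ for a well-defined worker $w_{i+1}$; here \emph{LAD} is what guarantees the choice set still has size $|\mu(f_i)|$ after removing $w_i$, so exactly one new worker $w_{i+1}$ enters, and substitutability (via \eqref{propiedad 1}) keeps the untouched workers of $\mu(f_i)\setminus\{w_i\}$ in the choice set. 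I then need that $w_{i+1}\in\mu(f_i)\setminus\widetilde\mu(f_i)$ is \emph{false} — rather, $w_{i+1}\notin\mu(f_i)$ — and that on the worker's side $f_i$ is not $w_{i+1}$'s $\widetilde\mu$-partner in a way that lets the chain continue; the symmetric half of Remark \ref{remark de reduccion M-M} applied to $w_{i+1}$ gives the firm $f_{i+1}$ with $w_{i+1}\in\mu(f_{i+1})\setminus\widetilde\mu(f_{i+1})$, via the dual/polarization structure ($\mu$ is worker-pessimal, $\widetilde\mu$ worker-optimal in $P^{\mu,\widetilde\mu}$). Iterating produces a sequence of pairs $(w_1,f_1),(w_2,f_2),\ldots$; since $F\cup W$ is finite, some pair must repeat, and I would argue (using that each $f_i$ determines $w_{i+1}$ and each $w_{i+1}$ determines $f_{i+1}$ deterministically) that the first repetition closes a genuine cycle, i.e. the sequence is eventually periodic and the periodic part $\{(w_1,f_1),\ldots,(w_r,f_r)\}$ — after relabelling to start at the repeated pair — satisfies (i), (ii). Finally I must verify condition (iii): that $C^{\mu,\widetilde\mu}_{w_i}(\mu(w_i)\cup\{f_{i-1}\})=(\mu(w_i)\setminus\{f_i\})\cup\{f_{i-1}\}$. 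This should follow by a symmetric choice-function computation on the worker side, using that $f_{i-1}$ was the firm whose $f_{i-1}$-choice-from-$W\setminus\{w_{i-1}\}$ brought $w_i$ in, plus substitutability and \emph{LAD} for $P^{\mu,\widetilde\mu}_{w_i}$, and the fact (Remark \ref{remark de reduccion M-M}(ii)) that $\mu(w_i)$ is $w_i$'s least-preferred stable set so that adding $f_{i-1}$ forces exactly $f_i$ out.

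The main obstacle I anticipate is the bookkeeping that makes the chain close into a cycle satisfying \emph{all three} conditions simultaneously — in particular reconciling the ``forward'' firm-side step (ii), which picks $w_{i+1}$ as what enters $f_i$'s choice set, with the ``backward'' worker-side condition (iii), which demands that when $f_{i-1}$ is offered to $w_i$ it is exactly $f_i$ that $w_i$ drops. Showing these are consistent — that the worker $w_i$ selected by $f_{i-1}$'s step is the same $w_i$ that, when choosing, ejects precisely $f_i$ — is where substitutability, \emph{LAD}, and the optimal/pessimal characterization of $\mu,\widetilde\mu$ in the reduced market all have to be combined carefully; I would expect to need a lemma (perhaps one of the later propositions, or a variant of the polarization-of-interests argument applied inside $P^{\mu,\widetilde\mu}$) asserting that the firm-side and worker-side ``pointer'' structures induced by the reduced profile are mutually inverse on the relevant pairs. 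The size-counting from \emph{LAD} — each agent having the same number of partners in $\mu$ and $\widetilde\mu$, hence exactly one in/one out at each step — is the technical linchpin that makes ``the next element is uniquely determined'' true, and without it the construction would branch.
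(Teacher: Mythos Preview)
Your overall strategy---iterate from a starting pair, follow ``pointers'' until finiteness forces a repeat, and extract a cycle---is the same idea as the paper's, which packages it as a bipartite digraph on nodes $V_1=\{(w,f):w\in\mu(f)\setminus\widetilde\mu(f)\}$ and $V_2$, with arcs encoding conditions (ii) and (iii), and then argues that a maximal path can have no terminal node (hence a digraph cycle exists). The cardinality reasoning via \emph{LAD} and the Rural Hospitals Theorem that you invoke is exactly what the paper uses to show each $V_1$-node has an outgoing arc.

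There is, however, a genuine gap in your construction. You propose to pick $f_{i+1}$ as \emph{some} firm with $w_{i+1}\in\mu(f_{i+1})\setminus\widetilde\mu(f_{i+1})$ (via ``the symmetric half of Remark~\ref{remark de reduccion M-M}'') and then verify condition (iii) afterwards. This is backwards and will not work: in the many-to-many setting several firms may satisfy $w_{i+1}\in\mu(\cdot)\setminus\widetilde\mu(\cdot)$, and there is no reason an arbitrary such choice would be precisely the firm that $w_{i+1}$ drops when $f_i$ is offered. The correct step is to \emph{define} $f_{i+1}$ through condition (iii): show first that $f_i\in C^{\mu,\widetilde\mu}_{w_{i+1}}(\mu(w_{i+1})\cup\{f_i\})$, then (by \emph{LAD}/Rural Hospitals) exactly one firm is ejected, and that firm is $f_{i+1}$. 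One then checks $f_{i+1}\in\mu(w_{i+1})\setminus\widetilde\mu(w_{i+1})$, which follows by substitutability (if $f_{i+1}\in\widetilde\mu(w_{i+1})=C^{\mu,\widetilde\mu}_{w_{i+1}}(F)$ it could not have been ejected).

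The step you correctly flag as the ``main obstacle''---that $f_i$ is indeed chosen by $w_{i+1}$ when added to $\mu(w_{i+1})$---is not delivered by polarization or Remark~\ref{remark de reduccion M-M} alone. The paper's argument (its Subcase~2.2) uses the mechanics of the reduction procedure itself: since $(\mu(f_i)\setminus\{w_i\})\cup\{w_{i+1}\}$ survives in $P^{\mu,\widetilde\mu}_{f_i}$, worker $w_{i+1}$ is acceptable to $f_i$ under $P^{\mu,\widetilde\mu}$; by Step~3 of the reduction this forces $f_i$ to be acceptable to $w_{i+1}$, which in turn (tracing back through Step~2(b)) yields $f_i\in C_{w_{i+1}}(\mu(w_{i+1})\cup\{f_i\})$ and hence $f_i\in C^{\mu,\widetilde\mu}_{w_{i+1}}(\mu(w_{i+1})\cup\{f_i\})$. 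You will need this argument, not an appeal to a later proposition.
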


In the one-to-one model, a cycle $\{(w_{1},f_{1}),(w_{2},f_{2}),\ldots,(w_r ,f_{r})\}$ for $P^{\mu, \mu_W}$ can be used to obtain a new matching from matching $\mu$ by breaking the partnership between firm $f_i$ and worker $w_i$ and establishing a new partnership between firm $f_i$ and worker $w_{i+1}$ for each $i=1, \ldots, r$ (modulo $r$), keeping all remaining partnerships in $\mu$ unaffected.  This new matching is called a cyclic matching. Using our many-to-many version of a cycle, we generalize the concept of cyclic matching in a straightforward way:

\begin{definition}\label{defino matching ciclico}
Let $\mu,\widetilde{\mu } \in S(P)$ with $\mu \succ_F \widetilde{\mu},$ and let $\sigma=\{(w_{1},f_{1}),(w_{2},f_{2}),\ldots,(w_r ,f_{r})\} $ be a cycle for $P^{\mu ,\widetilde{\mu }}$. The \textbf{cyclic matching $\boldsymbol{\mu _{\sigma }$ under $P^{\mu ,\widetilde{\mu }}}$} is defined as follows: for each $f\in F$

\[
\mu _{\sigma }\left( f\right) =\left\{ 
\begin{array}{lcl}
 \Big[ \mu (f) \setminus  \{w_i \ : \ f=f_i\} \Big]   \bigcup \{w_{i+1} \ : \ f=f_i\} &  & \text{if } f \in \sigma \\ 
&  &  \\ 
\mu (f) &  & \text{if } f\notin \sigma,
\end{array}%
\right. 
\]%

\noindent and for each $ w \in W$, $\mu_{\sigma}(w)=\{ f\in F: w\in \mu_{\sigma}(f) \} .$
\end{definition}

For Example \ref{Ejemplo 1}, we illustrate how to compute a cycle and its corresponding cyclic matching.

\noindent \textbf{Example 1 (Continued)} \textit{ $\sigma_{1}=\lbrace (w_1,f_1),(w_4,f_3) \rbrace $ is a cycle for $ P^{\mu_F} $ in Example \ref{Ejemplo 1}.
 To see this, we show that each worker-firm pair in  $\sigma_{1} $ satisfies (i), (ii) and (iii) of Definition \ref{defino cyclo}. }
  \begin{enumerate}
\item[\textit{(i)}] \textit{By inspection,} $ w_1\in \mu_F(f_1)\setminus \mu_W(f_1) $ \textit{and}  $ w_3\in \mu_F(f_3)\setminus \mu_W(f_3) $.
\item[\textit{(ii)}] $ C_{f_1}(W \setminus  \lbrace w_1   \rbrace )=C^{\mu_F}_{f_1}( \lbrace w_2,w_3,w_4,w_5,w_6 \rbrace )= \lbrace w_2,w_4 \rbrace = \left(\mu_F(f_1) \setminus  \lbrace w_1  \rbrace\right) \cup \lbrace w_2 \rbrace$,

 $ C_{f_3}(W \setminus  \lbrace w_4   \rbrace )=C^{\mu_F}_{f_3}( \lbrace w_1,w_2,w_3,w_5,w_6 \rbrace )= \lbrace w_1,w_2 \rbrace = \left(\mu_F(f_3) \setminus  \lbrace w_4  \rbrace\right) \cup \lbrace w_1 \rbrace$.
 \item[\textit{(iii)}] $ C^{\mu_F}_{w_1}(\mu_F(w_1) \cup \lbrace f_3 \rbrace)=  C^{\mu_F}_{w_1}(\lbrace f_1,f_3 \rbrace )= \lbrace f_3 \rbrace = \left(\mu_F(w_1) \setminus \lbrace f_1 \rbrace\right) \cup \lbrace f_3 \rbrace ,$
 
 $ C^{\mu_F}_{w_4}(\mu_F(w_4) \cup \lbrace f_1 \rbrace)=  C^{\mu_F}_{w_4}(\lbrace f_3,f_1 \rbrace )= \lbrace f_1 \rbrace = \left(\mu_F(w_4) \setminus \lbrace f_3 \rbrace \right)\cup \lbrace f_1 \rbrace .$

  \end{enumerate}
 
\noindent  \textit{Now, we compute its associated cyclic  matching $ \mu_{\sigma_1} $. Since $ f_1 $ and $ f_3 $ are firms in  $ \sigma_1 $, then
   $ \mu_{\sigma_1}(f_1)= \left(\mu_F(f_1) \setminus  \lbrace w_1  \rbrace \right)\cup \lbrace w_2 \rbrace = \lbrace w_2,w_4 \rbrace $ and $ \mu_{\sigma_1}(f_3)= \left(\mu_F(f_3) \setminus  \lbrace w_4  \rbrace \right)\cup \lbrace w_1 \rbrace = \lbrace w_1,w_2 \rbrace $. Thus,}
$
\mu_{\sigma_1}= \begin{pmatrix}

f_1 & f_2 & f_3 & \emptyset  \\
 w_2w_4& w_3w_5 &  w_1w_2& w_6 \\
 
\end{pmatrix}
$. \hfill $\Diamond$
\bigskip

In the next proposition, we state that each cyclic matching under a reduced preference profile is stable for that same reduced preference profile.
\begin{proposition}\label{ciclico es estable}
Let $\mu, \widetilde{\mu} \in S(P )$ with $\mu\succ_{F}\widetilde{\mu}$ and let  $\mu'$ be a
cyclic matching under $P^{\mu,\widetilde{\mu}}$. Then, $\mu'\in S(P^{\mu,\widetilde{\mu}})$.
\end{proposition}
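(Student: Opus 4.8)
The plan is to show that $\mu' = \mu_\sigma$ is individually rational under $P^{\mu,\widetilde\mu}$ and that it admits no blocking firm-worker pair under $P^{\mu,\widetilde\mu}$. First I would set up notation: write $\sigma=\{(w_1,f_1),\ldots,(w_r,f_r)\}$ and recall that, by Definition \ref{defino matching ciclico}, the only agents whose match changes in passing from $\mu$ to $\mu'$ are the firms $f_i$ appearing in $\sigma$ and the workers $w_i, w_{i+1}$ appearing in $\sigma$; every other agent has $\mu'(a)=\mu(a)$.

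For individual rationality, I would check each firm $f_i$ and each worker $w_i$ in $\sigma$ separately. For a firm $f=f_i$, condition (ii) of Definition \ref{defino cyclo} says $C^{\mu,\widetilde\mu}_{f_i}(W\setminus\{w_i\}) = (\mu(f_i)\setminus\{w_i\})\cup\{w_{i+1}\}$; since $\mu'(f_i)=(\mu(f_i)\setminus\{w_i\})\cup\{w_{i+1}\}\subseteq W\setminus\{w_i\}$ (note $w_i\notin\mu'(f_i)$ because $\mu$ is a matching and we only added $w_{i+1}\neq w_i$), the choice function applied to $\mu'(f_i)$ returns $\mu'(f_i)$ itself — here I would invoke substitutability of $P^{\mu,\widetilde\mu}_{f_i}$ (guaranteed by Lemma \ref{preferenceias resucidas sustituibles}) together with the identity \eqref{propiedad 1} or directly the fact that $C_f$ of a superset containing the chosen set recovers that set. (One subtlety: a firm may appear in $\sigma$ multiple times, i.e. as $f_i = f_j$ for $i\neq j$; I would handle this by noting condition (ii) can be applied iteratively, or better, that the reduced profile has $\mu(f)$ as top element and $\widetilde\mu(f)$ as bottom element (Remark \ref{remark de reduccion M-M}(i)–(ii)), so $\mu'(f_i)$ lies between them and is chosen from any set it is contained in.) For a worker $w=w_i$, condition (iii) gives the analogous statement, and since $\mu'(w_i) = (\mu(w_i)\setminus\{f_i\})\cup\{f_{i-1}\}$, individual rationality for $w_i$ follows the same way. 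Workers of the form $w_{i+1}$ that are not themselves some $w_j$: these have $\mu'(w_{i+1}) = \mu(w_{i+1})\cup\{f_i\}$ possibly — actually I need to be careful, and I would verify via the "only if" direction that $w_{i+1}$'s match under $\mu'$ is consistent with condition (iii) read at index $i+1$, or argue that $w_{i+1}$ is necessarily some $w_j$ in the cycle since cycles are closed sequences.

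For the absence of a blocking pair, suppose $(f,w)$ blocks $\mu'$ under $P^{\mu,\widetilde\mu}$, so $w\notin\mu'(f)$, $w\in C^{\mu,\widetilde\mu}_f(\mu'(f)\cup\{w\})$, and $f\in C^{\mu,\widetilde\mu}_w(\mu'(w)\cup\{f\})$. I would split into cases according to whether $f$ and/or $w$ are touched by $\sigma$. If $f\notin\sigma$ then $\mu'(f)=\mu(f)$, and the firm side of the blocking condition is about $\mu$; I would then use stability of $\mu$ under $P^{\mu,\widetilde\mu}$ (which holds because $\mu\in S(P)$ and $\mu\succeq_F\mu\succeq_F\widetilde\mu$, so Theorem \ref{estable original sii estable en el reducido M-to-M} gives $\mu\in S(P^{\mu,\widetilde\mu})$) to conclude that $f\notin C^{\mu,\widetilde\mu}_w(\mu(w)\cup\{f\})$ unless $w$'s side changed — so $w$ must be some $w_i$, and then I compare $\mu'(w_i)$ with $\mu(w_i)$ using that $\mu'(w_i)$ is Blair-preferred to $\mu(w_i)$ (it is obtained by replacing $f_i$ with the Blair-better $f_{i-1}$, by condition (iii)) and substitutability to derive a contradiction with $f\in C^{\mu,\widetilde\mu}_w(\mu'(w)\cup\{f\})$. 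The case $f=f_i\in\sigma$ is the most delicate: here $\mu'(f_i)=(\mu(f_i)\setminus\{w_i\})\cup\{w_{i+1}\}$, and condition (ii) says this set is exactly $C^{\mu,\widetilde\mu}_{f_i}(W\setminus\{w_i\})$. So for any $w\neq w_i$, $w\notin C^{\mu,\widetilde\mu}_{f_i}(W\setminus\{w_i\})\supseteq C^{\mu,\widetilde\mu}_{f_i}(\mu'(f_i)\cup\{w\})$ — wait, I need the monotone-type consequence of substitutability that $w\notin C_f(S)$ for a large $S$ implies $w\notin C_f(S')$ for $S'\subseteq S$ containing $w$; this is exactly the contrapositive of substitutability, so $w\notin C^{\mu,\widetilde\mu}_{f_i}(\mu'(f_i)\cup\{w\})$, killing the firm side of the block. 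The remaining sub-case is $w=w_i$ blocking with some firm $f\neq f_i$: then I would use condition (iii) at index $i$, which pins down $\mu'(w_i)$ as the least Blair-preferred set (among those Blair-better than $\mu(w_i)$) containing $f_{i-1}$, and argue that any $f$ that could block would have to be Blair-comparable in a way that contradicts either (iii) or the stability of $\mu$/$\widetilde\mu$.

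The main obstacle I expect is the bookkeeping when a firm or worker appears more than once in the cycle, and more fundamentally, pinning down exactly what $\mu'(w_{i+1})$ and $\mu'(w_i)$ are (the cyclic matching is defined firm-side, so worker matches are derived, and a single worker may gain one firm from one pair in $\sigma$ while being the "$w_i$" of another pair). I would resolve this by first proving a structural lemma: the workers appearing as some $w_i$ and the workers appearing as some $w_{i+1}$ are the same multiset (since indices are mod $r$), so in fact every "incoming" worker $w_{i+1}$ is also an "outgoing" worker $w_j$ for some $j$, and for such a worker $\mu'(w_j) = (\mu(w_j)\setminus\{f_j\})\cup\{f_{j-1}\}$ is the clean description to use throughout. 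With that lemma in hand, conditions (ii) and (iii) of Definition \ref{defino cyclo} together with substitutability and LAD (via Lemma \ref{preferenceias resucidas sustituibles}) and the extremal description of $\mu,\widetilde\mu$ in the reduced profile (Remark \ref{remark de reduccion M-M}) should close every case.
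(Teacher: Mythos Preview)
Your overall architecture---individual rationality first, then rule out blocking pairs by a case analysis on whether $f$ and $w$ lie in $\sigma$---matches the paper's proof, and the individual-rationality part is essentially the paper's argument. But your blocking-pair analysis contains a genuine error and leaves the final case unresolved.

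The error is in your treatment of the case $f=f_i\in\sigma$, $w\neq w_i$. You want to go from $w\notin C^{\mu,\widetilde\mu}_{f_i}(W\setminus\{w_i\})$ to $w\notin C^{\mu,\widetilde\mu}_{f_i}(\mu'(f_i)\cup\{w\})$, and you call this ``the contrapositive of substitutability.'' It is not. Substitutability says \emph{chosen from the larger set implies chosen from the smaller set containing the agent}; its contrapositive is \emph{not chosen from the smaller set implies not chosen from the larger set}. What you assert is the converse, and it is false in general. The paper closes this case with identity \eqref{propiedad 1} instead: since $\mu'(f_i)=C^{\mu,\widetilde\mu}_{f_i}(W\setminus\{w_i\})$, one has
\[
C^{\mu,\widetilde\mu}_{f_i}\bigl(\mu'(f_i)\cup\{w\}\bigr)=C^{\mu,\widetilde\mu}_{f_i}\bigl((W\setminus\{w_i\})\cup\{w\}\bigr),
\]
and when $w\neq w_i$ the right-hand side is $C^{\mu,\widetilde\mu}_{f_i}(W\setminus\{w_i\})=\mu'(f_i)$; hence $w\in\mu'(f_i)$, contradicting $w\notin\mu'(f_i)$. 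So the conclusion you want is correct, but the justification must go through \eqref{propiedad 1}, not a (nonexistent) monotonicity of choice.

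Your final sub-case is also not pinned down, and its phrasing (``$w=w_i$ blocking with some firm $f\neq f_i$'') is inconsistent with being inside the case $f=f_i$. Once the previous step forces $w=w_i$, the paper finishes in one line on the worker side: condition (iii) of Definition \ref{defino cyclo} gives $C^{\mu,\widetilde\mu}_{w_i}(\mu(w_i)\cup\{f_{i-1}\})=\mu'(w_i)$, and since $\mu(w_i)\cup\{f_{i-1}\}=\mu'(w_i)\cup\{f_i\}$ this reads $C^{\mu,\widetilde\mu}_{w_i}(\mu'(w_i)\cup\{f_i\})=\mu'(w_i)$; as $f_i\notin\mu'(w_i)$, the worker side of the block fails. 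Your ``Blair-comparable in a way that contradicts\ldots'' does not get there. Finally, for $f\notin\sigma$ the paper's route is much shorter than yours: by Remark \ref{remark de reduccion M-M}(i), $\mu'(f)=\mu(f)=C^{\mu,\widetilde\mu}_f(W)$, so again by \eqref{propiedad 1} one has $C^{\mu,\widetilde\mu}_f(\mu'(f)\cup\{w\})=C^{\mu,\widetilde\mu}_f(W)=\mu'(f)$, killing the firm side immediately---no need to invoke Theorem \ref{estable original sii estable en el reducido M-to-M} or examine the worker side.
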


The following proposition says that, given two  Blair-comparable stable matchings, there is a cyclic matching under the reduced preference profile with respect to the Blair-preferred one that is, either the least preferred of the two given stable matchings, or a matching in between the two (with respect to the unanimous Blair order).
\begin{proposition}\label{matching ciclicio arriba del peor}
Let $\mu, \mu' \in S(P)$ with   $\mu \succ_F \mu'$. Then, there is a cyclic matching $\mu_{\sigma}$ under $P^{\mu}$  such
that $\mu \succeq_{F} \mu_{\sigma}\succeq_{F}\mu'$.
\end{proposition}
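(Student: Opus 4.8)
The plan is to place $\mu'$ inside the interval cut out by the reduced profile $P^{\mu}$, and then to build, guided by $\mu'$ itself, a cycle for $P^{\mu}$ whose cyclic matching cannot fall below $\mu'$. As preliminaries I would note: since $\mu_{W}$ is worker-optimal and interests polarize, $\mu'\succeq_{F}\mu_{W}$, so $\mu\succ_{F}\mu'\succeq_{F}\mu_{W}$; hence Theorem~\ref{estable original sii estable en el reducido M-to-M} (with $\widetilde{\mu}=\mu_{W}$) gives $\mu'\in S(P^{\mu})$, and Remark~\ref{remark de reduccion M-M} says that $\mu$ and $\mu_{W}$ are, respectively, the firm-optimal and the firm-pessimal stable matchings of $P^{\mu}$. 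By Lemma~\ref{preferenceias resucidas sustituibles}, $P^{\mu}$ is again substitutable and satisfies LAD, so the Rural Hospitals Theorem holds for $P^{\mu}$ (whence $|\mu(f)|=|\mu'(f)|=|\mu_{W}(f)|$ for all $f$) and so does polarization of interests. Finally, I would record the containment $\mu(f)\setminus\mu'(f)\subseteq\mu(f)\setminus\mu_{W}(f)$ for every $f$ --- equivalently, any worker matched to $f$ under both $\mu$ and $\mu_{W}$ is matched to $f$ under $\mu'$ --- obtained either from Lemma~\ref{individual dentro de mu y mu tilde} or directly from substitutability and the duality of the Blair orders on $S(P^{\mu})$.

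Next I would construct the cycle. Set $B=\{(w,f):w\in\mu(f)\setminus\mu'(f)\}$. Because $\mu\succ_{F}\mu'$ and the cardinalities agree, $B\neq\emptyset$, and by the containment above every pair in $B$ satisfies Condition~(i) of Definition~\ref{defino cyclo} (taken with $\widetilde{\mu}=\mu_{W}$). Starting from an arbitrary $(w_{1},f_{1})\in B$, I would run exactly the construction used in the proof of Proposition~\ref{existencia de ciclo}: since $w_{i}\notin\mu_{W}(f_{i})$, substitutability and LAD yield a unique worker $w_{i+1}\notin\mu(f_{i})$ with $C^{\mu}_{f_{i}}(W\setminus\{w_{i}\})=(\mu(f_{i})\setminus\{w_{i}\})\cup\{w_{i+1}\}$, and Condition~(iii) then singles out a unique firm $f_{i+1}$ with $w_{i+1}\in\mu(f_{i+1})$. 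The step going beyond Proposition~\ref{existencia de ciclo} is to check that the new pair again belongs to $B$, i.e.\ $w_{i+1}\notin\mu'(f_{i+1})$; here stability of $\mu'$ for $P^{\mu}$ is essential: from $w_{i+1}\in C^{\mu}_{f_{i}}(W\setminus\{w_{i}\})$ and $w_{i}\notin\mu'(f_{i})$ one gets $w_{i+1}\in C^{\mu}_{f_{i}}(\mu'(f_{i})\cup\{w_{i+1}\})$, and if $w_{i+1}$ lay in $\mu'(f_{i+1})$ then, combining this with Condition~(iii), the identity~\eqref{propiedad 1} and LAD, one would produce a blocking pair for $\mu'$ (a pair of the form $(f_{i},w_{i+1})$ that is not $\mu'$-matched) or a cardinality contradiction. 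Since $B$ is finite, the sequence eventually revisits a pair; truncating at the first repetition gives a cycle $\sigma$ for $P^{\mu}$ all of whose pairs lie in $B$.

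Finally I would sandwich $\mu_{\sigma}$. By Proposition~\ref{ciclico es estable}, $\mu_{\sigma}\in S(P^{\mu})$; hence by Theorem~\ref{estable original sii estable en el reducido M-to-M}, $\mu_{\sigma}\in S(P)$ and $\mu\succeq_{F}\mu_{\sigma}\succeq_{F}\mu_{W}$, so in particular $\mu\succeq_{F}\mu_{\sigma}$. For $\mu_{\sigma}\succeq_{F}\mu'$ I would argue firm by firm. If $f\notin\sigma$ then $\mu_{\sigma}(f)=\mu(f)\succeq_{f}\mu'(f)$. If $f\in\sigma$, then $\mu_{\sigma}(f)$ is $f$'s $P^{\mu}_{f}$-best subset of $W$ after the workers $\{w_{i}:f_{i}=f\}$ are deleted (this follows by iterating Condition~(ii) with~\eqref{propiedad 1}); since all those deleted workers lie outside $\mu'(f)$ --- because the pairs of $\sigma$ belong to $B$ --- the set $\mu'(f)$ is among the feasible competitors in that choice problem, so $\mu_{\sigma}(f)\succeq^{\mu}_{f}\mu'(f)$; as $\mu_{\sigma}$ and $\mu'$ are stable for $P$ and lie in the interval $[\mu_{W},\mu]$, this yields $\mu_{\sigma}(f)\succeq_{f}\mu'(f)$. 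Therefore $\mu\succeq_{F}\mu_{\sigma}\succeq_{F}\mu'$, as desired.

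The step I expect to be the main obstacle is the inductive claim in the second paragraph: that the construction, begun inside $B$, never leaves $B$. This is precisely where the stability of $\mu'$ --- and not merely its being an interior point of the lattice --- must be exploited, and it calls for a careful analysis of the choice sets $C^{\mu}_{f_{i}}$ and $C^{\mu}_{w_{i+1}}$ via substitutability, LAD and~\eqref{propiedad 1}; the delicate possibility to rule out is that the ``new'' worker $w_{i+1}$ that $f_{i}$ picks up is already $\mu'$-assigned to $f_{i}$. An essentially equivalent alternative is to run the whole construction inside the profile $P^{\mu,\mu'}$ --- where a cycle exists by Proposition~\ref{existencia de ciclo} and each of its pairs lies in $B$ automatically by Condition~(i) --- at the price of then having to verify that such a cycle is also a cycle for $P^{\mu}$ (equality of the relevant reduced choice sets), which is of comparable difficulty.
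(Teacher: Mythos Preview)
Your ``alternative'' at the very end is in fact the paper's proof, and it is dramatically shorter than you anticipate. The paper simply passes to $P^{\mu,\mu'}$: since $\mu\neq\mu'$, Proposition~\ref{existencia de ciclo} gives a cycle $\sigma$ for $P^{\mu,\mu'}$; by Proposition~\ref{ciclico es estable}, $\mu_{\sigma}\in S(P^{\mu,\mu'})$; and then Theorem~\ref{estable original sii estable en el reducido M-to-M} (applied with $\widetilde{\mu}=\mu'$) yields $\mu\succeq_{F}\mu_{\sigma}\succeq_{F}\mu'$ in one stroke. No firm-by-firm sandwich argument is needed, and no ``stays in $B$'' claim arises, because the lower bound comes for free from stability in $S(P^{\mu,\mu'})$ via Theorem~\ref{estable original sii estable en el reducido M-to-M}. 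The paper does \emph{not} verify that $\sigma$ is also a cycle for $P^{\mu}$; strictly speaking its proof delivers a cyclic matching under $P^{\mu,\mu'}$, and that is precisely how the proposition is invoked later (see the proof of Proposition~\ref{todo estable es ciclico}, which explicitly speaks of ``a cyclic matching under $P^{\mu_F,\mu'}$''). If one insists on the literal phrase ``under $P^{\mu}$'', a transfer step in the spirit of Lemma~\ref{lema4} is required on top.

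By contrast, your main route has a genuine gap exactly where you flag it. The inductive claim that the successor pair $(w_{i+1},f_{i+1})$ produced by conditions (ii)--(iii) of Definition~\ref{defino cyclo} for $P^{\mu}$ again lies in $B$ is not established by your sketch. From $w_{i+1}\in C^{\mu}_{f_i}(W\setminus\{w_i\})$ and $w_i\notin\mu'(f_i)$ you correctly get $w_{i+1}\in C^{\mu}_{f_i}(\mu'(f_i)\cup\{w_{i+1}\})$; but to turn $(f_i,w_{i+1})$ into a blocking pair for $\mu'$ under $P^{\mu}$ you would also need $f_i\in C^{\mu}_{w_{i+1}}(\mu'(w_{i+1})\cup\{f_i\})$, and all you have is $f_i\in C^{\mu}_{w_{i+1}}(\mu(w_{i+1})\cup\{f_i\})$. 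Since $\mu'(w_{i+1})$ need not be contained in $\mu(w_{i+1})\cup\{f_i\}$, substitutability alone does not let you pass from one to the other. There is also the subcase $w_{i+1}\in\mu'(f_i)$, in which no blocking pair is available at all, yet you must still conclude $f_{i+1}\notin\mu'(w_{i+1})$; your sketch does not address this. In short, the deterministic ``next pair'' map for $P^{\mu}$ can differ from the one for $P^{\mu,\mu'}$ (because $f_i$'s acceptable workers under $P^{\mu}$ form a strictly larger set), and nothing you wrote rules out that the $P^{\mu}$-walk exits $B$. The clean fix is to abandon the direct construction in $P^{\mu}$ and adopt the paper's three-line argument in $P^{\mu,\mu'}$.
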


Finally, we state the last proposition before presenting the algorithm. It says that each stable matching for the original preference profile, different from the firm-optimal stable matching, is always a cyclic matching under a reduced preference profile with respect to  some other stable matching. 

\begin{proposition}\label{todo estable es ciclico}
Let $\mu'\in S(P)\setminus \{\mu_F\}$. Then, there is $\mu\in S(P)$ such that $\mu'$ is a cyclic matching under $P^{\mu}$.
\end{proposition}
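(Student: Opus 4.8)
The plan is to show that every stable matching $\mu' \neq \mu_F$ is ``reached in one cyclic step'' from some stable matching $\mu$ that is strictly Blair-above it. The natural candidate for $\mu$ is an immediate predecessor of $\mu'$ in the lattice $(S(P), \succeq_F)$: that is, a stable matching $\mu \succ_F \mu'$ such that there is no stable matching strictly between them. Such a $\mu$ exists because $S(P)$ is finite and $\mu_F \succeq_F \mu'$ with $\mu_F \neq \mu'$, so the set $\{\nu \in S(P) : \nu \succ_F \mu'\}$ is nonempty; picking a $\succeq_F$-minimal element of this set gives an immediate predecessor $\mu$. First I would fix this $\mu$ and record that, by construction, no element of $S(P)$ lies strictly (in $\succeq_F$) between $\mu$ and $\mu'$.

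Next I would invoke Proposition \ref{matching ciclicio arriba del peor} with this pair $\mu \succ_F \mu'$: it delivers a cycle $\sigma$ for $P^{\mu}$ whose cyclic matching satisfies $\mu \succeq_F \mu_\sigma \succeq_F \mu'$. By Proposition \ref{ciclico es estable}, $\mu_\sigma \in S(P^{\mu}) = S(P^{\mu,\mu_W})$, and by Theorem \ref{estable original sii estable en el reducido M-to-M} (together with Corollary \ref{corolario 1} and the polarization of interests, which guarantees $\mu_F \succeq_F \mu_\sigma \succeq_F \mu_W$), we get $\mu_\sigma \in S(P)$. So $\mu_\sigma$ is a genuine stable matching of the original market sitting Blair-between $\mu$ and $\mu'$. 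Since $\sigma$ is a cycle (not empty — it comes from Proposition \ref{existencia de ciclo} applied inside the proof of Proposition \ref{matching ciclicio arriba del peor}), $\mu_\sigma \neq \mu$, hence $\mu \succ_F \mu_\sigma$. Now minimality of $\mu$ among stable matchings strictly above $\mu'$ forces $\mu_\sigma = \mu'$: otherwise $\mu_\sigma$ would be a stable matching with $\mu \succ_F \mu_\sigma \succ_F \mu'$, contradicting that $\mu$ is an immediate predecessor. Therefore $\mu'$ is a cyclic matching under $P^{\mu}$, as required.

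The one point that needs care is verifying $\mu_\sigma \neq \mu$ — that the cyclic matching produced by Proposition \ref{matching ciclicio arriba del peor} genuinely differs from $\mu$. This should be immediate from the definition of a cyclic matching, since a cycle $\sigma = \{(w_1,f_1),\dots,(w_r,f_r)\}$ has $r \geq 1$ and condition (ii) of Definition \ref{defino cyclo} changes $\mu(f_1)$ (the worker $w_1 \in \mu(f_1)$ is replaced by $w_2 \neq w_1$, using condition (i) which gives $w_1 \in \mu(f_1)\setminus\widetilde\mu(f_1)$, so in particular $w_1 \notin \widetilde\mu(f_1)=\mu_W(f_1)$, and $w_{2}$ is distinct from $w_1$ by the structure of the cycle). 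I would also double-check that Proposition \ref{matching ciclicio arriba del peor} applies: it requires $\mu \succ_F \mu'$, which holds by our choice of $\mu$. The main obstacle, such as it is, is simply assembling the chain of prior results (lattice finiteness $\Rightarrow$ existence of an immediate predecessor; Proposition \ref{matching ciclicio arriba del peor} $\Rightarrow$ a cyclic matching in the interval; Proposition \ref{ciclico es estable} $+$ Theorem \ref{estable original sii estable en el reducido M-to-M} $\Rightarrow$ stability in $P$; minimality $\Rightarrow$ collapse to $\mu'$) in the right order and making sure the strictness hypotheses line up at each step.
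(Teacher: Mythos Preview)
Your argument is correct and somewhat cleaner than the paper's. The paper proceeds by an \emph{iterative descent}: it starts at $\mu_F$, looks at $P^{\mu_F,\mu'}$, and either finds that $\mu'$ itself is a cyclic matching there (in which case Lemma~\ref{lema4} transfers this to $P^{\mu_F}$) or uses Proposition~\ref{matching ciclicio arriba del peor} to produce a stable $\mu_1$ with $\mu_F\succ_F\mu_1\succ_F\mu'$, then repeats from $\mu_1$, terminating by finiteness of $S(P)$. You instead jump straight to an immediate predecessor $\mu$ of $\mu'$ in $(S(P),\succeq_F)$ and apply Proposition~\ref{matching ciclicio arriba del peor} once; the ``no stable matching strictly between'' property forces $\mu_\sigma=\mu'$ immediately. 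Your route bypasses the explicit appeal to Lemma~\ref{lema4} (it is hidden inside the statement of Proposition~\ref{matching ciclicio arriba del peor}, which already asserts the cyclic matching is under $P^{\mu}$) and avoids the inductive bookkeeping. The paper's descent, on the other hand, makes visible the chain $\mu_F\succ_F\mu_1\succ_F\cdots\succ_F\mu_k\succ_F\mu'$ that the algorithm itself traverses, so it dovetails more directly with the proof of Theorem~\ref{Teorema final}. Your verification that $\mu_\sigma\neq\mu$ is fine: condition~(ii) of Definition~\ref{defino cyclo} together with \emph{LAD} and the Rural Hospitals Theorem force $w_{i+1}\notin\mu(f_i)$, so $\mu_\sigma(f_i)\neq\mu(f_i)$.
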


\subsection{The Algorithm}\label{sebseccion del algoritmo}
We are now in a position to present our algorithm. Before that, we briefly explain it.
Given a matching market $(F,W,P)$, by the \emph{DA} algorithm we compute the two optimal stable matchings, $\mu_F$ and $\mu_W$. If the two optimal stable matchings are equal, the algorithm stops and the market has only this  stable matching. If they are different,  for the firms' side, we obtain the reduced preference profile with respect to $\mu_F$, $P^{\mu_F}$. In each of the following steps, proceed as follows. For each  reduced preference profile obtained in the previous step, we  compute the following things: (i) each cycle for this profile; (ii) for each cycle, its corresponding cyclic matching; and (iii) for each cyclic matching,  the reduced preference profile with respect to this cyclic matching. The algorithm stops at the step in which all the cyclic matchings computed are equal to the worker optimal stable matching. Formally,

\begin{center}

\begin{tabular}{l l}
\hline \hline
\multicolumn{2}{l}{\textbf{Algorithm:}}\vspace*{10 pt}\\

\textbf{Input} & A many-to-many matching market $(F,W,P)$\\

\textbf{Output} & The set of stable matchings $S(P)$\vspace*{10 pt}\\


\textbf{Step 1} & Find $\mu_F$ and $\mu_W$ (by the \emph{DA}
algorithm) \\ 
& and set $S(P):=\{\mu_F, \mu_W\}$ \\
& \hspace{20 pt}\texttt{IF} $\mu_F=\mu_W$, \\
& \hspace{40 pt}\texttt{THEN} \texttt{STOP}. \\
& \hspace{20 pt}\texttt{ELSE} obtain $P^{\mu_F}$ and continue to Step $2.$  \\
\textbf{Step $\boldsymbol{t}$} &  For each reduced preference profile $P^{\mu}$ obtained
in Step $t-1$, \\
& find all cycles for $P^{\mu}$  and for each cycle obtain its cyclic matching \\
&  under $P^{\mu}$ and include it in $S(P)$.  \\
& \hspace{20 pt}\texttt{IF} each cyclic matching obtained in this step is equal to $\mu_W$, \\
& \hspace{40 pt}\texttt{THEN} \texttt{STOP}. \\
& \hspace{20 pt}\texttt{ELSE} for each cyclic matching $\mu'\neq \mu_W$, obtain the reduced \\
&\hspace{45 pt} preference profile $P^{\mu'}$ and continue to Step $t+1.$ \vspace*{5 pt} \\

\hline \hline
\end{tabular}
\end{center}
\bigskip

Notice that this algorithm stops in a finite number of steps by the finiteness of the market. 
Now, we present the main result of the paper. It states that the firms' optimal stable matching together with all the cyclic matchings obtained by the algorithm encompass the full set of stable matchings.

\begin{theorem}\label{Teorema final}
For a market $(F,W,P)$, the algorithm computes the full set of stable matchings $S(P)$.
\end{theorem}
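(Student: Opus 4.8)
\noindent The plan is to prove that the set $S^{*}$ of matchings returned by the algorithm equals $S(P)$, by establishing the two inclusions $S^{*}\subseteq S(P)$ (soundness) and $S(P)\subseteq S^{*}$ (completeness); termination in finitely many steps is already noted to follow from finiteness of the market. As a preliminary I would record an auxiliary fact used in both directions: \emph{every} cyclic matching $\mu_{\sigma}$ under a reduced profile $P^{\mu,\widetilde{\mu}}$ satisfies $\mu\succ_{F}\mu_{\sigma}$ \emph{strictly}. The weak inequality $\mu\succeq_{F}\mu_{\sigma}$ is immediate from Proposition \ref{ciclico es estable} together with Remark \ref{remark de reduccion M-M}(ii), since $\mu$ is the firm-optimal stable matching of $P^{\mu,\widetilde{\mu}}$. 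For $\mu_{\sigma}\neq\mu$, note that in Definition \ref{defino cyclo}(ii) the worker $w_{i+1}$ that $f_{i}$ picks up cannot already lie in $\mu(f_{i})$: otherwise $C^{\mu,\widetilde{\mu}}_{f_{i}}(W\setminus\{w_{i}\})$ would equal $\mu(f_{i})\setminus\{w_{i}\}$, of cardinality $|\mu(f_{i})|-1$, contradicting \emph{LAD} (which $P^{\mu,\widetilde{\mu}}$ satisfies by Lemma \ref{preferenceias resucidas sustituibles}) applied to the acceptable set $\widetilde{\mu}(f_{i})\subseteq W\setminus\{w_{i}\}$, whose cardinality equals $|\mu(f_{i})|$ by the Rural Hospitals property of $P^{\mu,\widetilde{\mu}}$. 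Hence $\mu_{\sigma}(f_{i})$ contains $w_{i+1}\notin\mu(f_{i})$, so $\mu_{\sigma}\neq\mu$.

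\noindent For soundness I would induct on the step number, the claim being that every matching in $S^{*}$ at the end of that step is stable for $P$. At Step~1, $\mu_{F},\mu_{W}\in S(P)$. If $\mu\in S(P)$ is a matching whose reduced profile $P^{\mu}=P^{\mu,\mu_{W}}$ is processed at some step and $\mu'$ is a cyclic matching under it, then $\mu\succeq_{F}\mu_{W}$ (by optimality of $\mu_{F},\mu_{W}$ and polarization of interests), so Proposition \ref{ciclico es estable} gives $\mu'\in S(P^{\mu,\mu_{W}})$ and Theorem \ref{estable original sii estable en el reducido M-to-M} then gives $\mu'\in S(P)$. Thus every matching the algorithm includes in $S(P)$ is stable.

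\noindent For completeness I would induct on the height $h(\mu')$ of a stable matching $\mu'$ in the finite lattice $(S(P),\succeq_{F})$, defined as the maximal length of a $\succ_{F}$-chain from $\mu_{F}$ down to $\mu'$. If $h(\mu')=0$ then $\mu'=\mu_{F}\in S^{*}$; if moreover $\mu_{F}=\mu_{W}$ then antisymmetry of the Blair order together with optimality and polarization force $S(P)=\{\mu_{F}\}$, so assume $\mu_{F}\neq\mu_{W}$. If $\mu'\neq\mu_{F}$, then by Proposition \ref{todo estable es ciclico} there is $\nu\in S(P)$ such that $\mu'$ is a cyclic matching under $P^{\nu}$; by the auxiliary fact $\nu\succ_{F}\mu'$, hence $h(\nu)\leq h(\mu')-1$, so $\nu\in S^{*}$ by the inductive hypothesis. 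Moreover $\nu\succ_{F}\mu'\succeq_{F}\mu_{W}$ forces $\nu\neq\mu_{W}$, so the algorithm forms $P^{\nu}$ and at the next step computes \emph{every} cyclic matching under it, in particular $\mu'$; therefore $\mu'\in S^{*}$. (Proposition \ref{matching ciclicio arriba del peor}, which supplies a cyclic matching under $P^{\nu}$ sitting between $\nu$ and any prescribed lower stable matching, is the natural tool behind Proposition \ref{todo estable es ciclico} and could alternatively be used to organize a cover-by-cover descent from $\mu_{F}$.) Together with soundness this yields $S^{*}=S(P)$.

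\noindent The step I expect to be the main obstacle is completeness: Proposition \ref{todo estable es ciclico} (or, more granularly, Proposition \ref{matching ciclicio arriba del peor}) is precisely what makes the backward induction close, but one must still (i) verify that the parent $\nu$ it supplies lies strictly above $\mu'$ in $\succ_{F}$ — this is exactly the auxiliary strictness fact, the one genuinely new computation, resting on Definition \ref{defino cyclo}(ii) and \emph{LAD} — and (ii) confirm that the algorithm actually reaches and fully processes $P^{\nu}$ before halting, which holds because $\nu\neq\mu_{W}$ and the stopping rule triggers only once every cyclic matching produced at a step equals $\mu_{W}$. Everything else reduces to bookkeeping with Theorem \ref{estable original sii estable en el reducido M-to-M}, Proposition \ref{ciclico es estable}, and finiteness.
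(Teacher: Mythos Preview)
Your proof is correct and follows essentially the same approach as the paper's: soundness via Proposition~\ref{ciclico es estable} and Theorem~\ref{estable original sii estable en el reducido M-to-M}, and completeness via Proposition~\ref{todo estable es ciclico}, with your induction on lattice height being the direct dual of the paper's infinite-descent contradiction. The one substantive addition is your explicit verification that $\mu\succ_{F}\mu_{\sigma}$ is strict (via \emph{LAD} and the Rural Hospitals Theorem in $P^{\mu,\widetilde{\mu}}$), a fact the paper simply asserts parenthetically in its proof.
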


The following example illustrates the algorithm.

\bigskip

\noindent \textbf{Example 1 (Continued)} \textit{We apply the algorithm to the market of Example \ref{Ejemplo 1}. In what follows, we detail each of its steps:}

 \noindent \textit{\textbf{Step 1}  By the \emph{DA} algorithm, we compute the two optimal stable matchings:} 

\medskip
$
\mu_F=\begin{pmatrix}

f_1 & f_2 & f_3 & \emptyset  \\
 w_1w_2& w_3w_5 &  w_2w_4& w_6 \\
 
\end{pmatrix}
$,
\textit{ and }$
\mu_W=\begin{pmatrix}

f_1 & f_2 & f_3 & \emptyset \\
 w_3w_4& w_2w_5 &  w_1w_2& w_6 \\
 
\end{pmatrix}.$

 \medskip
 
\noindent \textit{Since $ \mu_{F}\neq \mu_W$, we apply the reduction procedure to $ P $  to obtain $ P^{\mu_F} $ which we already computed in Example \ref{Ejemplo 1}}.

\noindent  \textit{\textbf{Step 2} We find all cycles for $ P^{\mu_F} $. There are only two cycles: $ \sigma_{1}=\{( w_1,f_1),(w_4,f_3) \} $ and $ \sigma_{2}=\{ (w_2,f_1),(w_3,f_2) \} $. Their corresponding cyclic matchings are:}\medskip

$
\mu_{\sigma_1}= \begin{pmatrix}

f_1 & f_2 & f_3 & \emptyset  \\
 w_2w_4& w_3w_5 &  w_1w_2& w_6 \\
 
\end{pmatrix}
$,
\textit{and}
 $
\mu_{\sigma_2}= \begin{pmatrix}

f_1 & f_2 & f_3 & \emptyset \\
 w_1w_3& w_2w_5 &  w_2w_4& w_6 \\
 
\end{pmatrix}.$

\medskip
\noindent \textit{Since $ \mu_{\sigma_1}\neq \mu_W$, we apply the reduction procedure to $ P^{\mu_F} $  to obtain the reduced preference profile with respect to $\mu_{\sigma_1}$, $ P^{\sigma_1} $; and since  $ \mu_{\sigma_2}\neq \mu_W$, we apply the reduced preference profile with respect to $\mu_{\sigma_2}$, $ P^{\sigma_2} $. These two profiles are the following:}\medskip

\noindent\begin{tabular}{l}
$P^{\sigma_1}_{f_1}:\boldsymbol{w_2w_4},\boldsymbol{w_3w_4},w_2w_3,w_4,w_3,w_2$\\
$P^{\sigma_1}_{f_2}:\boldsymbol{w_3w_5},\boldsymbol{w_2w_5},w_2w_3,w_2,w_3,w_5$\\
$P^{\sigma_1}_{f_3}:\boldsymbol{w_1w_2},w_1,w_2$\\
$P^{\sigma_1}_{w_1}:\boldsymbol{f_3}$\\
$P^{\sigma_1}_{w_2}:\boldsymbol{f_2f_3},\boldsymbol{f_1f_3},f_1f_2,f_1,f_2,f_3$\\
$ P^{\sigma_1}_{w_3}:\boldsymbol{f_1},\boldsymbol{f_2}$\\
$ P^{\sigma_1}_{w_4}:\boldsymbol{f_1}, f_3$\\
$ P^{\sigma_1}_{w_5}:\boldsymbol{f_2}$\\
$ P^{\sigma_1}_{w_6}:\boldsymbol{\emptyset}$\\
\end{tabular}~~~~~~~~~~~~~~~~~~~
\noindent\begin{tabular}{l}
$P^{\sigma_2}_{f_1}:\boldsymbol{w_1w_3},w_1w_4,\boldsymbol{w_3w_4},w_1,w_4,w_3$\\
$P^{\sigma_2}_{f_2}:\boldsymbol{w_2w_5},w_2,w_5$\\
$P^{\sigma_2}_{f_3}:\boldsymbol{w_2w_4},\boldsymbol{w_1w_2},w_1w_4,w_1,w_2,w_4$\\
$P^{\sigma_2}_{w_1}:\boldsymbol{f_3},\boldsymbol{f_1}$\\
$P^{\sigma_2}_{w_2}:\boldsymbol{f_2f_3},\boldsymbol{f_1f_3},f_1,f_2,f_3$\\
$P^{\sigma_2}_{w_3}:\boldsymbol{f_1}$\\
$P^{\sigma_2}_{w_4}:\boldsymbol{f_1},\boldsymbol{f_3}$\\
$P^{\sigma_2}_{w_5}:\boldsymbol{f_2}$\\
$P^{\sigma_2}_{w_6}:\emptyset$\\
\end{tabular}\medskip

 \noindent \textit{\textbf{Step 3} Lastly, we find all cycles for $ P^{\sigma_1} $ and $ P^{\sigma_2} $. The only cycle for $ P^{\sigma_1} $ is $ \sigma_{2}=\{ (w_2,f_1),(w_3,f_2) \} $. Similarly, the only cycle for $ P^{\sigma_2} $ is  $ \sigma_{1}=\{( w_1,f_1),(w_4,f_3) \} $. Their corresponding cyclic matchings are both equal to $\mu_W$. Then,
 the algorithm stops and $S(P) =\{ \mu_F,\mu_{\sigma_1},\mu_{\sigma_2},\mu_W \}$.\hfill $\Diamond$}

\section{Concluding Remarks}\label{Concludings}

 For a many-to-many matching market in which agents' preferences satisfy substitutability and \emph{LAD}, we presented an algorithm to compute the full set of stable matchings. Our approach extends the notion of cycles and cyclic matchings presented in the classic book of  \cite{roth1992two}.  Given  any stable matching $\mu$, each adjacent stable matching $\mu'$ 
 is obtained as a cyclic matching under the reduced preference profile $P^{\mu}$.\footnote{Stable matchings $\mu$ and $\mu'$ are \emph{adjacent} if $\mu\succ_F\mu'$ and there is no other stable matching $\mu''$ such that $\mu \succ_F \mu'' \succ_F \mu'$.} Even though our results make no use of the lattice structure of the stable set, our algorithm  travels through this lattice from the firm-optimal to the worker-optimal stable matching, finding all stable matchings in between. 

It is known that the complexity of implementation of any algorithm that evaluates a choice function for substitutable preferences is exponential (a choice function requires exponential queries to a substitutable preferences relation). However, when preferences are substitutable, computer scientists usually assume the existence of an artificial \emph{oracle} to the choice function: in every iteration of an algorithm, each agent can query the oracle to determine its favorite subset of opposite sided agents available \citep[see ][for more details]{deng2017complexity}. With the assumption of an oracle, our algorithm can be run in polynomial time. 

A  paper closely related to ours is \cite{martinez2004algorithm}, that claims to compute the full set of many-to-many stable matchings. An important difference between the algorithm of \cite{martinez2004algorithm} and ours is that theirs is based on the one-to-one algorithm presented by \cite{mcvitie1971stable}. The \emph{DA} algorithm  must be applied to a reduced preference profile in \emph{each} step of the algorithm of  \cite{martinez2004algorithm}, while in our algorithm we use the \emph{DA} algorithm only twice (to calculate the firm-optimal  and worker-optimal stable matchings in the first step) and afterward we only seek for cycles in a reduced preference profile and compute their corresponding cyclic matchings. Another difference is  that  \cite{martinez2004algorithm} only assume substitutability on agents' preferences, while we assume in addition \emph{LAD}.

Next, we provide an example that shows that the algorithm in \cite{martinez2004algorithm} has an error (the algorithm does not compute the full set of stable matchings).
Before presenting this example, we roughly explain how their algorithm works. Let $(F,W,P)$ be a matching market. By using the \emph{DA} algorithm, compute $\mu_F$ and $\mu_W$ and set $S^{\star}(P)=\{\mu_F,\mu_W\}$. In Step $1$, for each pair $(f,w)$ such that $w\in \mu_F(f)\setminus \mu_W(f)$, (i) compute the  $w$--truncation of $P_f$ and consider the new preference profile $P^{(f,w)}$ obtained from $P$ by replacing $P_f$ by the  $w$--truncation of $P_f$;\footnote{See Definition \ref{defino trucacion} in the Appendix.}  (ii) compute, by the \emph{DA} algorithm, the firm-optimal stable matching  for the new market $(F,W,P^{(f,w)})$, denoted by  $\mu_F^{(f,w)}$; (iii) if  $C_{w'}(\mu_F(w')\cup \mu_F^{(f,w)}(w'))=\mu_F^{(f,w)}(w')$ for each $w'\in W$, then add $\mu_F^{(f,w)}$ to $S^{\star}(P)$. In Step $t$, for each  matching added  to $S^{\star}(P)$ in Step $t-1$, repeat items (i), (ii),
and (iii) of Step $1$ for each pair $(f,w)$ such that $w$ is matched to $f$ under this new matching but not under the original worker-optimal stable matching.
The algorithm stops in the step in which no matching is added  to $S^{\star}(P)$. \cite{martinez2004algorithm} wrongly state that  $S^{\star}(P)=S(P).$
 
Now, we are in a position to  present the example\footnote{This example was provided to one of the authors of this paper by Xuan Zhang.} showing that: (i) the algorithm of \cite{martinez2004algorithm} stops before computing all stable matchings, and (ii) our algorithm computes the full set of stable matchings.
 
 \begin{example}\label{ejemplo 2}
 Let $(F,W,P)$ be a one-to-one matching market in which $F=\{f_1,f_2,f_3,f_4\}$, $W=\{w_1,w_2,w_3,w_4 \}$, and the preference profile is given by:
 
\begin{center}
\noindent\begin{tabular}{lcl}
$P_{f_1}:w_2,w_1,w_3,w_4$&~~~~~~& $ P_{w_1}:f_2,f_1,f_4,f_3$\\
$P_{f_2}:w_4,w_2,w_3,w_1$&~~~~~~& $ P_{w_2}:f_4,f_3,f_2,f_1$\\
$P_{f_3}:w_4,w_2,w_3,w_1$&~~~~~~& $ P_{w_3}:f_3,f_1,f_4,f_2$\\
$P_{f_4}:w_3,w_1,w_4,w_2$&~~~~~~& $ P_{w_4}:f_1,f_3,f_4,f_2$\\
\end{tabular}
\end{center}

\noindent Agents' preferences in a one-to-one matching market satisfy substitutability and LAD because they are linear orderings among single agents. For this market, there are three stable matchings: 

\begin{center}
 $
\mu_F=\begin{pmatrix}

f_1 & f_2 & f_3 & f_4  \\
 w_1& w_2 &  w_4& w_3 \\
 
\end{pmatrix}
$, $
\mu=\begin{pmatrix}

f_1 & f_2 & f_3 & f_4  \\
 w_3& w_1 &  w_4& w_2 \\
 
\end{pmatrix}
$,
 and $
\mu_W=\begin{pmatrix}

f_1 & f_2 & f_3 & f_4\\
 w_4& w_1&  w_3& w_2 \\
 
\end{pmatrix}.$
\end{center}

\noindent Following the algorithm of \cite{martinez2004algorithm}, the pairs $(f,w)$ such that $w\in \mu_F(f)\setminus \mu_W(f)$ are: $(f_1,w_1),(f_2,w_2),(f_3,w_4)$ and $(f_4,w_3)$. For each of these pairs $(f,w)$, the firm-optimal stable matching for the $w$--truncation of $P_f$ are:
\begin{center}
$
\mu^{(f_1,w_1)}_F=\begin{pmatrix}

f_1 & f_2 & f_3 & f_4  \\
 w_3& w_2 &  w_4& w_1 \\
 
\end{pmatrix}
$,
$
\mu^{(f_2,w_2)}_F=\begin{pmatrix}

f_1 & f_2 & f_3 & f_4  \\
 w_2& w_1 &  w_4& w_3 \\
 
\end{pmatrix}
$,
\end{center}

\begin{center}
$
\mu^{(f_3,w_4)}=\begin{pmatrix}

f_1 & f_2 & f_3 & f_4  \\
 w_1& w_4 &  w_2& w_3 \\
 
\end{pmatrix}
$, and
$
\mu^{(f_4,w_3)}=\begin{pmatrix}

f_1 & f_2 & f_3 & f_4  \\
 w_1& w_3 &  w_4& w_2 \\
 
\end{pmatrix}.
$
\end{center}

\noindent Notice that 

$C_{w_1}(\mu_F(w_1)\cup \mu_F^{(f_1,w_1)}(w_1))=C_{w_1}(\{f_1,f_4\})=\{f_1\}\neq \mu_F^{(f_1,w_1)}(w_1),$

$C_{w_2}(\mu_F(w_2)\cup \mu_F^{(f_2,w_2)}(w_2))=C_{w_2}(\{f_2,f_1\})=\{f_2\}\neq \mu_F^{(f_2,w_2)}(w_2),$

$C_{w_4}(\mu_F(w_4)\cup \mu_F^{(f_3,w_4)}(w_4))=C_{w_4}(\{f_3,f_2\})=\{f_3\}\neq \mu_F^{(f_3,w_4)}(w_4),$ and

$C_{w_3}(\mu_F(w_3)\cup \mu_F^{(f_4,w_3)}(w_3))=C_{w_3}(\{f_4,f_2\})=\{f_4\}\neq \mu_F^{(f_4,w_3)}(w_3).$
\medskip

\noindent Thus, the algorithm does not incorporate any matching to $S^{\star}(P)$ and, therefore, stops without computing  stable matching $\mu$.

\noindent Now we show how our algorithm computes all of these three stable matchings. Once we compute $\mu_F$ and $\mu_W$ by the DA algorithm, the reduced preference profile $P^{\mu_F}$ is given by:

 \begin{center}
 \noindent\begin{tabular}{lcl}
$P^{\mu_F}_{f_1}:w_1,w_3,w_4$&~~~~~~& $ P^{\mu_F}_{w_1}:f_2,f_1$\\
$P^{\mu_F}_{f_2}:w_2,w_1$&~~~~~~& $ P^{\mu_F}_{w_2}:f_4,f_3,f_2$\\
$P^{\mu_F}_{f_3}:w_4,w_2,w_3$&~~~~~~& $ P^{\mu_F}_{w_3}:f_3,f_1,f_4$\\
$P^{\mu_F}_{f_4}:w_3,w_2$&~~~~~~& $ P^{\mu_F}_{w_4}:f_1,f_3$\\
\end{tabular}

\end{center}  

\noindent It is easy to check that there is only one cycle for $P^{\mu_F}$, $\sigma_{1}=\{ (w_1,f_1),(w_3,f_4),(w_2,f_2) \} $. Its corresponding cyclic matching is $\mu_{\sigma_1}=\mu$. Now, the reduced preference profile $P^{\mu_{\sigma_1}}$ is given by:
\begin{center}
 \noindent\begin{tabular}{lcl}
$P^{\mu_F}_{f_1}:w_3,w_4$&~~~~~~& $ P^{\mu_F}_{w_1}:f_2$\\
$P^{\mu_F}_{f_2}:w_1$&~~~~~~& $ P^{\mu_F}_{w_2}:f_4$\\
$P^{\mu_F}_{f_3}:w_4,w_3$&~~~~~~& $ P^{\mu_F}_{w_3}:f_3,f_1$\\
$P^{\mu_F}_{f_4}:w_2$&~~~~~~& $ P^{\mu_F}_{w_4}:f_1,f_3$\\
\end{tabular}
\end{center}

\noindent Finally, it is easy to check that there is only one cycle for $P^{\mu_{\sigma_1}}$, $\sigma_{2}=\{ (w_3,f_1),(w_4,f_3)\} $. Its corresponding cyclic matching is $\mu_{\sigma_2}=\mu_W$. In this way, our algorithm computes the full set of stable matchings for the market $(F,W,P).$ \hfill $\Diamond$
 
 \end{example}
It seems that the problem with the algorithm in \cite{martinez2004algorithm} is that it is ill-posed in the following way. There are matchings that the algorithm computes that are not stable in the original preferences.  Because of this, the algorithm dismisses them. But those matchings turn out to be crucial to find new stable matchings.
 For instance, in Example \ref{ejemplo 2}, the matching $\mu^{(f_1,w_1)}_F$ is unstable for the original preferences, and the algorithm in \cite{martinez2004algorithm} dismisses it. However if we truncate the preference profile $P^{(f_1,w_1)}$ with the pair $(f_4,w_1)$ and obtain the firm optimal matching for this new truncated profile, we obtain matching $ \mu$ which is stable in the original preference profile and is never computed by the algorithm (as shown in the previous example).

 \appendix
 \section{Appendix}\label{appendix}

In order to prove Lemma \ref{preferenceias resucidas sustituibles}, we first define a $w$--truncation of preference $P_f$ and adapt  a lemma of \cite{martinez2004algorithm} to our setting.
\begin{definition}[\citealp{martinez2004algorithm}]\label{defino trucacion}
We say that  the preference $P_f^{w}$ is the $w$--truncation of $P_f$ if:
\begin{enumerate}[(i)]
\item All sets containing $w$ are unacceptable to $f$ according to $P_f^w$. That is, if $w\in S$ then $\emptyset P_f^w S.$
\item The preferences $P_f$ and $P_f^w$ coincide on all sets that do not contain $w$. That is, if $w\notin S_1 \cup S_2$ then $S_1 P_f S_2$ if and only if $S_1P_f^w S_2$.
\end{enumerate}Similarly, we define $P^f_w$ as  an $f$--truncation of $P_w$.
\end{definition}
\begin{remark}\label{remark de lad en truncacion}
Given a $w$--truncation of $P_f$ and any subset of workers $S$,  $C_f(S\setminus \{w\})=C_f^w(S).$ Similarly, given a $f$--truncation of $P_w$ and any subset of firms $S$,  $C_w(S\setminus \{f\})=C_w^f(S).$
\end{remark}

\begin{lemma}\label{truncacion es sustituible}
Let $f\in F$ and $w\in W$ with their respective preference relations $P_f$ and $P_w$. If $P_f$ is substitutable  and satisfies LAD, then $P^w_{f}$ is substitutable and satisfies LAD. Similarly, if $P_w$ is substitutable and satisfies LAD, then $P^f_{w}$ is substitutable and satisfies LAD.
\end{lemma}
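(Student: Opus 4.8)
The statement to prove is Lemma~\ref{truncacion es sustituible}: if $P_f$ is substitutable and satisfies LAD, then the $w$-truncation $P_f^w$ is substitutable and satisfies LAD (and symmetrically for workers). The plan is to reduce everything to the identity recorded in Remark~\ref{remark de lad en truncacion}, namely $C_f^w(S)=C_f(S\setminus\{w\})$ for every $S\subseteq W$, and then simply transport the two properties across this equation.

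First I would prove substitutability. Fix $S\subseteq W$ and $b\in S$ with $b\in C_f^w(S)$; I must show $b\in C_f^w(S'\cup\{b\})$ for every $S'\subseteq S$. Using the remark, $b\in C_f^w(S)=C_f(S\setminus\{w\})$, so in particular $b\neq w$. Substitutability of $P_f$ applied to the set $S\setminus\{w\}$ (which contains $b$) gives $b\in C_f\big((S'\setminus\{w\})\cup\{b\}\big)$ for every $S'\subseteq S$; but $(S'\setminus\{w\})\cup\{b\}=(S'\cup\{b\})\setminus\{w\}$ because $b\neq w$, and $C_f\big((S'\cup\{b\})\setminus\{w\}\big)=C_f^w(S'\cup\{b\})$ by the remark again. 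Hence $b\in C_f^w(S'\cup\{b\})$, which is substitutability of $P_f^w$.

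Next I would handle LAD. Take $S'\subseteq S\subseteq W$; I must show $|C_f^w(S')|\le|C_f^w(S)|$. By the remark this is $|C_f(S'\setminus\{w\})|\le|C_f(S\setminus\{w\})|$, and since $S'\setminus\{w\}\subseteq S\setminus\{w\}$, this is exactly an instance of LAD for $P_f$. The worker-side claim (if $P_w$ is substitutable and satisfies LAD then so is $P_w^f$) is proved verbatim by exchanging the roles of firms and workers and invoking the symmetric half of Remark~\ref{remark de lad en truncacion}.

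I do not expect a genuine obstacle here: the content is entirely in Remark~\ref{remark de lad en truncacion}, and once that identity is available the argument is a two-line substitution in each case. The only point requiring a moment's care is the elementary set manipulation $(S'\setminus\{w\})\cup\{b\}=(S'\cup\{b\})\setminus\{w\}$, which is valid precisely because the chosen element $b$ is distinct from $w$ — and that distinctness is itself a consequence of $b$ lying in $C_f^w(S)$, a set of $w$-free subsets. So the proof is really just a careful bookkeeping of this non-membership of $w$.
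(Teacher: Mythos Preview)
Your proof is correct and follows essentially the same approach as the paper: both reduce the claim to the identity $C_f^w(S)=C_f(S\setminus\{w\})$ of Remark~\ref{remark de lad en truncacion} and then transport substitutability and LAD across it. Your version is in fact slightly cleaner, since the paper splits the substitutability argument into the cases $w\in S$ and $w\notin S$, whereas you apply the remark uniformly and only need the single observation that $b\neq w$ to justify the set identity $(S'\setminus\{w\})\cup\{b\}=(S'\cup\{b\})\setminus\{w\}$.
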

\begin{proof} Let $f\in F$, $w\in W$, and $P_f$ be substitutable and satisfies \emph{LAD}. Let $P^w_{f}$ be the $w$--truncation of $P_f$. We only prove that if $P_f$ is substitutable and satisfies \emph{LAD}, then $P^w_{f}$ is substitutable  and satisfies \emph{LAD}. The other implication is analogous. 
To see that $P^w_{f}$ is substitutable, let $ \widetilde{w}, w'\in S$ be arbitrary and assume that $\widetilde{w}\in C_f^w(S).$\footnote{Denote by  $C_f^w(S)$ to $f$'s most preferred subset of $S$ according to the $w$--truncation of $P_f$.}  If $w\notin S$, then $\widetilde{w}\in C^w_f(S\setminus \{w'\})$ because $C^w_f(S)=C_f(S)$, $C^w_f(S\setminus \{w'\})=C_f(S\setminus \{w'\})$, and because of the substitutability of $P_f$. If $w\in S$, then we have that $C^w_f(S)=C_f(S\setminus \{w\})$; therefore, by assumption $\widetilde{w}\in C_f(S\setminus \{w'\}).$ By the substitutability of $P_f$, we have that $\widetilde{w}\in C_f([S\setminus \{w\}]\setminus \{w'\}).$ But, the equality $C_f([S\setminus \{w\}]\setminus \{w'\})=C^w_f(S\setminus \{w'\})$ implies that $\widetilde{w} \in C^w_f(S\setminus \{w'\})$. Therefore, $P^w_{f}$ is substitutable.

To see that $P^w_{f}$ satisfies \emph{LAD}, let $S'$ and $S$ be two subsets of workers such that $S'\subset S$. Note that, $S'\setminus\{w\} \subset S\setminus\{w\}.$ Then, by Remark \ref{remark de lad en truncacion} and the fact that $P_{f}$ satisfies \emph{LAD} we have,
$$|C^w_f(S')|= |C_f(S'\setminus \{w\})|\leq |C_f(S\setminus \{w\})|=|C^w_f(S)|.$$
Therefore, $P^w_{f}$ satisfies \emph{LAD}.
\end{proof}

\bigskip

\noindent \begin{proof}[Proof of Lemma \ref{preferenceias resucidas sustituibles}]W.l.o.g. assume that agent $a$ is a firm, say $f\in F$.
Let $P_f$ be a substitutable preference that satisfies \emph{LAD}. Let $\widetilde{W}_f$ be the set of  workers selected in  Step 1 (a), Step 2 (a) or Step 3 of the reduction procedure for firm $f$. Take any $w \in \widetilde{W}_f$ and consider the $w$--truncation $P_f^w$. By Lemma \ref{truncacion es sustituible}, preference $P_f^w$ is substitutable  and satisfies \emph{LAD}. Now, any take $w'\in \widetilde{W}_f\setminus \{w\}$ and consider the $w'$--truncation of $P_f^w$. Again by Lemma \ref{truncacion es sustituible}, the $w'$--truncation of $P_f^w$ is substitutable  and satisfies \emph{LAD}. Continuing in the same way for each worker of $\widetilde{W}_f$ not yet considered, we construct the corresponding truncation of the previously obtained truncated preference. By Lemma \ref{truncacion es sustituible}, each one of these truncated preferences is substitutable  and satisfies \emph{LAD}.  By the finiteness of the set $\widetilde{W}_f$, this process will end. Moreover, by definition of  $\widetilde{W}_f$,  the last truncated preference obtained in this process is $P_f^{\mu,\widetilde{\mu}}$. Therefore, preference $P_f^{\mu,\widetilde{\mu}}$ is substitutable  and satisfies \emph{LAD}. 
\end{proof}

In order to prove Theorem \ref{estable original sii estable en el reducido M-to-M}, we first show in Lemma \ref{individual dentro de mu y mu tilde} that,  under certain conditions, individual rationality of a matching under the original preference profile is equivalent to individual rationality  under a reduced preference profile. As we said before the statement of Theorem \ref{estable original sii estable en el reducido M-to-M}, \emph{LAD} is no required for this result.

\begin{lemma}\label{individual dentro de mu y mu tilde}
Let $\mu, \widetilde{\mu} \in S(P)$ with $\mu \succ_F \widetilde{\mu}$ and let $\mu'$ be a matching. The matching  $\mu'$ is individually rational under $P$ with $\mu \succeq_{F}\mu'\succeq_{F} \widetilde{\mu}$ and  $\widetilde{\mu} \succeq_{W}\mu'\succeq_{W} \mu$  if and only if $\mu'$ is individually rational under $P^{\mu,\widetilde{\mu}}.$\footnote{Recall that $\succeq_F$ and $\succeq_W$ are dual orders only in the set of stable matchings, so both $\mu \succeq_{F}\mu'\succeq_{F} \widetilde{\mu}$ and  $\widetilde{\mu} \succeq_{W}\mu'\succeq_{W} \mu$ need to be required.}
\end{lemma}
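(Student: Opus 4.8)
The statement is an ``if and only if'' about individual rationality, so the plan is to prove each direction by carefully tracking how the three steps of the reduction procedure act on the acceptable sets of a fixed agent, say a firm $f$ (the worker case is symmetric, using the dual roles of $\mu$ and $\widetilde{\mu}$). The central observation to record first is a description of exactly which subsets survive the reduction: after Steps 1--3, the sets $W'$ remaining in $f$'s list are, roughly, those $W'$ with $\widetilde{\mu}(f)\succeq_f W'$ is false-of-the-bad-kind and $W'\succ_f\mu(f)$ is false, together with the mutual-acceptability pruning of Step 3; equivalently, I want to show that $W'$ survives in $P_f^{\mu,\widetilde\mu}$ if and only if $W'$ contains no worker forbidden by Steps 1(a), 2(a) or 3. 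By Lemma \ref{truncacion es sustituible} applied iteratively (as in the proof of Lemma \ref{preferenceias resucidas sustituibles}), I know $P_f^{\mu,\widetilde\mu}$ is obtained from $P_f$ by successively truncating away each worker in the set $\widetilde W_f$, and Remark \ref{remark de lad en truncacion} gives the clean identity $C_f^{\mu,\widetilde\mu}(S)=C_f(S\setminus\widetilde W_f)$ for every $S\subseteq W$. This identity is the workhorse for both directions.

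\textbf{Forward direction.} Assume $\mu'$ is individually rational under $P$ and sandwiched as stated: $\mu\succeq_F\mu'\succeq_F\widetilde\mu$ and $\widetilde\mu\succeq_W\mu'\succeq_W\mu$. I want $\mu'(f)=C_f^{\mu,\widetilde\mu}(\mu'(f))$ for every $f$, i.e., using the identity above, $\mu'(f)=C_f(\mu'(f)\setminus\widetilde W_f)$. Since $\mu'$ is individually rational under $P$, $\mu'(f)=C_f(\mu'(f))$, so it suffices to show $\mu'(f)\cap\widetilde W_f=\emptyset$ — no worker removed for $f$ during the reduction sits in $\mu'(f)$. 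This is where the Blair-sandwich hypotheses are used: if $\tilde w\in\mu'(f)$ had been removed by Step 1(a), there would be $W'\succ_f\mu(f)$ with $\tilde w\in W'\setminus\mu(f)$; I must derive a contradiction with $\mu\succeq_F\mu'$ (so $\mu(f)\succeq_f\mu'(f)$) together with substitutability — essentially $\tilde w\in\mu'(f)$, $\mu(f)\succeq_f\mu'(f)$, and $W'\succ_f \mu(f)$ with $\tilde w\notin\mu(f)$ cannot coexist under substitutability. Steps 2(a) and 3 are handled analogously, Step 2(a) using $\mu'\succeq_F\widetilde\mu$ and $\widetilde\mu$ being the relevant lower bound, and Step 3 using the dual $W$-side bounds on $\mu'(w)$ to guarantee that workers deleted from $f$ for acceptability reasons are not in $\mu'(f)$ (here individual rationality of $\mu'$ on the worker side, plus the bounds, forces $\{f\}$ to still be acceptable to $w$ whenever $w\in\mu'(f)$).

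\textbf{Reverse direction.} Assume $\mu'$ is individually rational under $P^{\mu,\widetilde\mu}$. Then $\mu'(f)=C_f(\mu'(f)\setminus\widetilde W_f)$ by the identity; in particular $\mu'(f)\cap\widetilde W_f=\emptyset$ (a set containing a truncated worker is unacceptable, so it cannot equal its own choice set), whence $\mu'(f)=C_f(\mu'(f))$ and $\mu'$ is individually rational under $P$. For the Blair bounds $\mu\succeq_F\mu'\succeq_F\widetilde\mu$, I use Remark \ref{remark de reduccion M-M}(i): $\mu(f)=C_f^{\mu,\widetilde\mu}(W)$ is the top of the reduced list and $\widetilde\mu(f)$ is — by part (ii) and the firm-pessimality clause — the bottom among stable outcomes; since $\mu'(f)$ is an acceptable (nonblocked) set in $P_f^{\mu,\widetilde\mu}$, it lies Blair-between them, giving $\mu(f)\succeq_f\mu'(f)\succeq_f\widetilde\mu(f)$. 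The dual $W$-side inequalities $\widetilde\mu\succeq_W\mu'\succeq_W\mu$ come the same way from the worker-side clauses of Remark \ref{remark de reduccion M-M}. A small care point: ``firm-pessimal'' in Remark \ref{remark de reduccion M-M} is stated for stable matchings, while here $\mu'$ is only individually rational, so I should instead argue directly from the structure of the reduced list (every surviving acceptable set for $f$ is Blair-above $\widetilde\mu(f)$ and Blair-below $\mu(f)$), which follows from the explicit description of which sets survive Steps 1--2.

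\textbf{Main obstacle.} The delicate part is the ``no worker of $\widetilde W_f$ lies in $\mu'(f)$'' step in the forward direction, and symmetrically pinning down exactly which sets survive the reduction (needed for the reverse direction's Blair bounds). This requires combining substitutability with the Blair-order inequalities in a slightly non-obvious way — translating ``$W'\succ_f\mu(f)$ and $\tilde w\in W'\setminus\mu(f)$'' plus ``$\mu(f)\succeq_f\mu'(f)$'' plus ``$\tilde w\in\mu'(f)$'' into a contradiction via property \eqref{propiedad 1} and the definition of $\succeq_f$. I expect this to be a short but careful argument, and the interplay of the $F$-side and $W$-side bounds in Step 3 is where one must be most attentive.
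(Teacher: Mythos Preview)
Your plan is correct and follows essentially the same route as the paper's proof: in the forward direction you show that $\mu'(f)$ (and dually $\mu'(w)$) contains no worker removed in Steps 1--3, and in the reverse direction you read off individual rationality under $P$ and the Blair sandwich from the structure of the reduced list. The paper argues exactly this, though more tersely---it simply records the four Blair identities $\mu(f)=C_f(\mu(f)\cup\mu'(f))$, $\mu'(f)=C_f(\widetilde\mu(f)\cup\mu'(f))$, etc., and asserts ``therefore $\mu'(f)$ and $\mu'(w)$ are not eliminated at Step 1 and Step 2,'' leaving the substitutability/property~\eqref{propiedad 1} manipulation implicit; for the reverse direction it just states that the Blair bounds follow ``from the definition of reduced preference.'' Your explicit use of the iterated-truncation identity $C_f^{\mu,\widetilde\mu}(S)=C_f(S\setminus\widetilde W_f)$ (which, as you note, comes from Remark~\ref{remark de lad en truncacion} and does not need LAD) is a cleaner packaging of the same argument and makes the ``$\mu'(f)\cap\widetilde W_f=\emptyset$'' formulation transparent.

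Two small remarks. First, Step 2(a) is not quite symmetric to Step 1(a): for Step 1(a) one shows $\tilde w\in C_f(\mu(f)\cup\{\tilde w\})$ and then uses property~\eqref{propiedad 1} with $\mu(f)=C_f(\mu(f)\cup\mu'(f))$ to force $\tilde w\in\mu(f)$; for Step 2(a) one first uses property~\eqref{propiedad 1} on $\widetilde\mu(f)=C_f(\widetilde\mu(f)\cup W')$ to deduce $\widetilde\mu(f)=C_f(\widetilde\mu(f)\cup\{\tilde w\})$, and then gets the contradiction from $\tilde w\in C_f(\widetilde\mu(f)\cup\{\tilde w\})$ via substitutability applied to $\mu'(f)=C_f(\mu'(f)\cup\widetilde\mu(f))$. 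So ``analogously'' is right in spirit but the order of the two tools is reversed. Second, your care point about the reverse-direction lower Blair bound is well taken: the paper glosses over exactly the same step, and your proposal to derive it from the explicit description of surviving sets is the only available route---just be aware that ``every surviving acceptable set is Blair-above $\widetilde\mu(f)$'' needs the characterization that Step 2(a) removes precisely those $\tilde w$ with $\tilde w\notin C_f(\widetilde\mu(f)\cup\{\tilde w\})$, together with a short argument combining these pointwise facts into the set-level Blair inequality.
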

\begin{proof}
Let $ \mu, \widetilde{\mu}\in S(P)$ with $\mu \succ_F \widetilde{\mu}$ and let $\mu'$ be a matching.  

\noindent $(\Longrightarrow)$ Assume that the matching $ \mu'  $ is individually rational under $P$ with  $\mu \succeq_{F}\mu'\succeq_{F} \widetilde{\mu}$ and  $\widetilde{\mu} \succeq_{W}\mu'\succeq_{W} \mu$. We claim that $ \mu'(f) $ and $ \mu'(w) $ are not eliminated in the reduction to obtain $P^{\mu,\widetilde{\mu}}$ for each $ f\in F $ and $ w\in W $. Since $\mu \succeq_{F}\mu'\succeq_{F} \widetilde{\mu}$, we have $ \mu(f)= C_{f}(\mu(f)\cup
\mu'(f))$ and $ \mu'(f)= C_{f}(\widetilde{\mu}(f)\cup
\mu'(f))$ for each $ f\in F $ . Moreover, since  $\widetilde{\mu} \succeq_{W}\mu'\succeq_{W} \mu$, we have
$\widetilde{\mu}(w)= C_{w}(\mu'(w)\cup
\widetilde{\mu}(w))$ and $ \mu'(w)= C_{w}(\mu'(w)\cup
\mu(w))$
for each $ w\in W $. Therefore, $ \mu'(f) $ and $ \mu'(w) $ are not eliminated at Step 1, and Step 2 
of the reduction procedure. 
Let $ (f,w) $ be a pair  assigned in $\mu'  $. Since $\mu'$ is individually rational, the pair $(f,w)$ is mutually acceptable under $P$. Moreover, since $ \mu'(f) $ and $ \mu'(w) $ were not eliminated at Step 1, or Step 2, then $(f,w)$ is mutually acceptable under $P^{\mu,\widetilde{\mu}}$.
Thus, no pair of agents assigned in $ \mu' $ is eliminated in Step 3 of the reduction procedure. Then,
\begin{equation*}
C^{\mu,\widetilde{\mu}}_{f}(\mu'(f))=C_{f}(\mu'(f))=\mu'(f)
\end{equation*}
and 
\begin{equation*}
C^{\mu,\widetilde{\mu}}_{w}(\mu'(w))=C_{w}(\mu'(w))=\mu'(w).
\end{equation*}
Therefore, $ \mu' $ is an individuality rational matching under $ P^{\mu,\widetilde{\mu} } $.

\noindent $(\Longleftarrow )$  Assume that the matching $ \mu'$ is individually rational under $P^{\mu,\widetilde {\mu}}.$ The definition of reduced preference $P^{\mu,\widetilde {\mu}}$ implies that $\mu \succeq_{F}\mu'\succeq_{F} \widetilde{\mu}$ and  $\widetilde{\mu} \succeq_{W}\mu'\succeq_{W} \mu.$ Let $f\in F.$ Notice that, by the reduction procedure, if $ w\in  C^{\mu,\widetilde{\mu}}_{f}(\mu'(f))$ then, $ w\in  C_{f}(\mu'(f))$. Since  $\mu'(f)= C^{\mu,\widetilde{\mu}}_{f}(\mu'(f)) \subseteq C_{f}(\mu'(f))\subseteq \mu'(f)$, we have that $ C_{f}(\mu'(f))=\mu'(f)$. Similarly,  $C_{w}(\mu'(w))=\mu'(w)$ for each $w \in W$. Therefore,  $\mu' $ is an individually rational matching under $P$.
\end{proof}

\bigskip

\noindent \begin{proof}[Proof of Theorem \ref{estable original sii estable en el reducido M-to-M}]Let $\mu,\widetilde{\mu}\in S(P)$ with $\mu\succ_{F}\widetilde{\mu}$.

 \noindent $(\Longleftarrow)$
Let $\mu'\in S(P^{\mu, \widetilde{\mu}}).$
By Lemma \ref{individual dentro de mu y mu tilde}, we have that $\mu'$ is individually rational under $P$. Assume that $\mu'\notin S(P)$. Thus, there is a blocking pair of $\mu'$ under $P,$ i.e.,  there is  $(f,w) \in F \times W$ such that $w\notin \mu'(f),~ w\in C_f(\mu'(f)\cup \{w\})$ and $f\in C_w(\mu'(w)\cup \{f\}).$

\noindent \textbf{Claim: neither $\boldsymbol{C_f(\mu'(f)\cup \{w\})$ nor $C_w(\mu'(w)\cup \{f\})}$ is eliminated in the reduction procedure}. First, assume w.l.o.g. that $C_f(\mu'(f)\cup \{w\})$ is eliminated in Step 1 the reduction procedure. There are two cases to consider:

\noindent \textbf{Case 1: $\boldsymbol{ C_f(\mu'(f)\cup \{w\})\nsucceq_f \mu(f)}$}. Thus, there are $W'$ and  $\widetilde{w}$ such that $\widetilde{w}\in W'\setminus \mu(f)$, $\widetilde{w}\in C_f(\mu'(f)\cup \{w\})$ and $W'=C_f(W' \cup \mu(f)).$ Then, if $\widetilde{w}\in \mu'(f)$, $\mu'(f)$ is eliminated in Step 1 of the reduction procedure. Therefore, $\mu'(f)\neq C_f^{\mu, \widetilde{\mu}}(\mu'(f)),$ and $\mu'$ is not individually rational under  $P^{\mu, \widetilde{\mu}}$, contradicting Lemma \ref{individual dentro de mu y mu tilde}. If $\widetilde{w}=w$, $w\in C_f(W' \cup \mu(f))$ and, by substitutability, 
\begin{equation}\label{queseyo0}
w\in C_f(\mu(f)\cup \{w\}).
\end{equation}
Moreover, by definition of Blair's partial order and \eqref{propiedad 1} 
\begin{equation}\label{queseyo1}
C_w(\mu'(w)\cup \{f\}) \succeq_w C_w(\mu'(w)).
\end{equation}
 Since $\mu'$ is individually rational under $P^{\mu, \mu'},$ $\mu'$ is individually rational under $P$ and $\widetilde{\mu} \succeq_{W}\mu'\succeq_{W} \mu$ by Lemma \ref{individual dentro de mu y mu tilde}. Then, $\mu'(w)=C_w(\mu'(w)) \succeq_w \mu(w)$ and, by \eqref{queseyo1} and  transitivity of $\succeq_w$, we have $
C_w(\mu'(w)\cup \{f\}) \succeq_w\mu(w).$  Thus, $C_w(\mu'(w)\cup \{f\})=C_w(\mu(w) \cup C_w(\mu'(w)\cup \{f\})).$ Applying \eqref{propiedad 1}, we have $C_w(\mu(w) \cup C_w(\mu'(w)\cup \{f\}))=C_w(\mu(w) \cup \mu'(w)\cup \{f\})$.
Recall that $(f,w)$ is a blocking pair for $\mu'$ under $P$. Hence, $f\in C_w(\mu'(w)\cup \{f\})=C_w(\mu(w) \cup \mu'(w)\cup \{f\}).$ Since $w=\widetilde{w}\in W' \setminus \mu(f)$, $f\notin \mu(w)$. Thus, by substitutability,  $f\in C_w(\mu(w) \cup \mu'(w)\cup \{f\})$ implies that 
\begin{equation}\label{queseyo3}
f \in C_w(\mu(w)\cup \{f\}).
\end{equation}
Furthermore, since $w=\widetilde{w}$, $w \notin \mu(f)$. This, together  with  \eqref{queseyo0} and \eqref{queseyo3} imply that $(f,w)$ is a blocking pair for $\mu$ under $P$. This is a contradiction.

\noindent \textbf{Case 2: $\boldsymbol{ C_f(\mu'(f)\cup \{w\}) \succeq_f \mu(f)}$}. Since we assume that $(f,w)$ is a blocking pair for $\mu'$ under $P$,  $w\in C_f(\mu'(f)\cup \{w\}).$ By this case's hypothesis, $w\in  C_f(\mu(f) \cup C_f(\mu'(f)\cup \{w\}))$.  By \eqref{propiedad 1},  $w\in C_f(\mu(f) \cup \mu'(f) \cup \{w\})=C_f(C_f(\mu(f) \cup \mu'(f)) \cup \{w\})$. Since $\mu \succeq_F \mu'$, 
\begin{equation}\label{queseyo4}
 w\in C_f(\mu(f) \cup \{w\}).
\end{equation}

Now, we claim  that $w\notin \mu(f)$. First, note that $f\in C_w(\mu'(w)\cup \{f\})$ and $f\notin \mu'(w)$ implies that $C_w(\mu'(w)\cup \{f\})\succ_w \mu'(w).$ Second,  $\mu'\in S(P^{\mu,\widetilde{\mu}})$ implies, by Lemma \ref{individual dentro de mu y mu tilde}, $\mu'\succeq_W \mu$. Thus,  $ \mu'(w)=C_w(\mu'(w)\cup \mu(w))$ for each $w\in W.$  Lastly, since $f\notin \mu'(w)$ and  assuming  that $f\in \mu(w)$, we conclude that $$ \mu'(w)=C_w(\mu'(w)\cup \mu(w))\succeq_w C_w(\mu'(w)\cup \{f\})\succ_w \mu'(w),$$  and this is a contradiction. Then, 
\begin{equation}\label{queseyo5}
w\notin \mu(f).
\end{equation} 
Moreover, by the same argument used to obtain \eqref{queseyo3},    $f\in C_w(\mu'(w)\cup \{f\})$ implies that 
\begin{equation}\label{queseyo6}
f\in  C_w(\mu(w)\cup \{f\}).
\end{equation} 
Hence, by \eqref{queseyo4}, \eqref{queseyo5}, and \eqref{queseyo6}, $(f,w)$ is a blocking pair for $\mu$ under $P$, and this is a contradiction. 
Therefore, by Case 1 and Case 2, $C_f(\mu'(f)\cup \{w\})$ is not eliminated in Step 1. 

Second, assume w.l.o.g. that $C_f(\mu'(f)\cup \{w\})$ is eliminated in Step 2 the reduction procedure.  Note that this cannot happen, because $C_f(\mu'(f)\cup \{w\})\succeq_f \mu'(f) \succeq_f\widetilde{\mu}(f)$ for each $f\in F.$
By a symmetrical argument, we have that $C_w(\mu'(w)\cup \{f\})$ can not be eliminated in Step 1 or Step 2 either. 

Now, we show that  neither $C_f(\mu'(f)\cup \{w\})$ nor $C_w(\mu'(w)\cup \{f\})$ is eliminated in Step 3.
Assume w.l.o.g.  that $C_f(\mu'(f)\cup \{w\})$ is eliminated in Step 3. Thus,  there is $\widetilde{w}\in C_f(\mu'(f)\cup \{w\})$ such that $ \widetilde{w} $ is not acceptable for $ f $ after Steps 1 and 2 are performed. Note that this implies that $ C_f^{\mu,\mu'}( \{\widetilde{w}\})\neq\{\widetilde{w}\}$. By definition of $ C_f $ we have $ C_f(\mu'(f)\cup \{w\})\subseteq  \mu'(f)\cup \{w\} $. Thus, $ \widetilde{w}\in \mu'(f) $ or $ \widetilde{w}=w $. If $ \widetilde{w}\in \mu'(f) $, since $ \mu' $ is individually rational under $ P^{\mu,\widetilde{\mu}} $, $\widetilde{w}\in\mu'(f)=C_f^{\mu,\widetilde{\mu}}( \mu'(f))$ and, by substitutability,  $\widetilde{w}\in C_f^{\mu,\widetilde{\mu}}( \{\widetilde{w}\})\neq \{\widetilde{w}\},$ which is absurd. Therefore,  $ \widetilde{w}=w $. Since $ (f,w) $ is a blocking pair of $ \mu' $ under $ P $, $w\in C_f(\mu'(f)\cup \{w\}) $.  Since $ w $ is not acceptable for $ f $ after Step 1 and Step 2, this implies that any set that contains agent $ w $ is removed from $f$'s preference list at Step 1 or Step 2. Thus,  $ C_f(\mu'(f)\cup \{w\}) $ is removed from $f$'s preference list in Step 1 or Step 2, and this is a contradiction. Therefore, $C_f(\mu'(f)\cup \{w\})$ is not eliminated in Step 3. A similar argument proves that $C_w(\mu'(w)\cup \{f\})$ is not eliminated either in Step 3.  This completes the proof of the Claim.

In order to finish the proof, since by the Claim neither $C_f(\mu'(f)\cup \{w\})$ nor $C_f(\mu'(w)$ $\cup \{f\})$ is eliminated by the reduction procedure, we have that $C_f(\mu'(f)\cup \{w\})=C^{\mu,\widetilde{\mu}}_f(\mu'(f)\cup \{w\})$ and  $C_f(\mu'(w)\cup \{f\})=C^{\mu,\widetilde{\mu}}_f(\mu'(w)\cup \{f\})$. Then, $(f,w)$ is a blocking pair for $\mu'$ under  $P^{\mu,\widetilde{\mu}}$, and this is a contradiction.  Therefore, $\mu'\in S(P).$
\medskip

\noindent $(\Longrightarrow)$
Let $\mu'\in S(P)$ with $\mu\succeq_{F}\mu'\succeq_{F}\widetilde{\mu}$. This implies that $\widetilde{\mu}\succeq_{W}\mu'\succeq_{W}\mu$. By Lemma \ref{individual dentro de mu y mu tilde}, we have that $\mu'$ is individually rational in $P^{\mu,\widetilde{\mu}}$. Assume that $\mu'\notin S(P^{\mu,\widetilde{\mu}})$. Thus, there is a pair $(f,w) \in F \times W$ such that $w\notin \mu'(f),~ w\in C^{\mu,\widetilde{\mu}}_f(\mu'(f)\cup \{w\})$ and $f\in C^{\mu,\widetilde{\mu}}_f(\mu'(w)\cup \{f\}).$ By the reduction procedure   $~ w\in C_f(\mu'(f)\cup \{w\})$ and $f\in C_f(\mu'(w)\cup \{f\}). $ Therefore the pair $(f,w)$ blocks $ \mu' $ under $P$, and this is a  contradiction of $\mu'\in S(P)$. Thus, $\mu'\in S(P^{\mu,\widetilde{\mu}}).$
\end{proof}

\bigskip

\noindent \begin{proof}[Proof of Proposition \ref{existencia de ciclo}]$(\Longrightarrow)$ This implication is straightforward from Definition \ref{defino cyclo}, since there is a cycle only if there is a firm $f$ such that 
$\mu(f)\neq \widetilde{\mu}(f)$ under $P^{\mu,\widetilde{\mu}}$.

\noindent $(\Longleftarrow )$ Assume that $\mu \neq \widetilde{\mu}$. We construct a bipartite oriented the digraph $D^{\mu, \widetilde{\mu}}$ with sets of nodes 
\[
V_{1} =\left\{ \left( w,f\right) \in W\times F:w\in \mu(f) \setminus \widetilde{\mu}(f) \right\} 
\] and 
\[
V_{2} =(F \times W) \setminus \{(f,w) : (w,f) \in V_1\}. 
\]
Since $\mu \neq \widetilde{\mu},$ both $V_1$ and $V_2$ are non-empty. The oriented arcs are defined as follows. There is and arc from $(w,f) \in V_{1}$ to $(f',w') \in V_{2}$ if $$f=f' \text{  and  }  C^{\mu, \widetilde{\mu}}_{f}( W\setminus \{w\})= \left(\mu(f) \setminus \{w\}\right)\cup \{w'\}.$$
There is an arc from $(f',w') \in V_2$ to $(w,f) \in V_1$ if $$w'=w \text{  and  } C^{\mu, \widetilde{\mu}}_{w}\left(\mu(w)\cup \{f'\}\right)=\left(\mu(w)\setminus \{f\}\right)\cup \{f'\}.$$
It is easy to see that there is an oriented cycle in the digraph $D^{\mu, \widetilde{\mu}}$ if and only if there is  a cycle for preference $P^{\mu, \widetilde{\mu}}.$ In fact, if $\{(w_1, f_1), (f_1, w_2), (w_2, f_2), (f_2, w_3), \ldots, (w_r, f_r), (f_r, w_1)\}$ is a cycle for $D^{\mu, \widetilde{\mu}},$ then $\{w_1, f_1, w_2, f_2, \ldots, w_r, f_r\}$ is a cycle for $P^{\mu, \widetilde{\mu}}.$ Assume that there is no cycle for $P^{\mu, \widetilde{\mu}}.$ Then, there is no cycle in digraph $D^{\mu, \widetilde{\mu}}.$ Let $p$  be a maximal path in $D^{\mu, \widetilde{\mu}}.$ There are two cases to consider: 

\noindent \textbf{Case 1: the terminal node  $\boldsymbol{(w,f)$ of  $p$ belongs to $V_1}$}. Then $w \in \mu(f)\setminus \widetilde{\mu}(f)$ and there is no $w' \in W$ such that $w' \notin \mu(f) \setminus \widetilde{\mu}(f)$ and  $w' \in C_{P_{f}^{\mu,\widetilde{\mu}}}( W\setminus
\{w\}).$ Therefore, $C_{f}^{\mu,\widetilde{\mu}}( W\setminus
\{w\}) \subsetneq C_{f}^{\mu,\widetilde{\mu}}( W) =\mu(f).$ By \emph{LAD}, 
\begin{equation}\label{pruebadigrafo1}
|C_{f}^{\mu,\widetilde{\mu}}( W\setminus
\{w\})| < |\mu(f)|.
\end{equation}
Moreover, since $w \in \mu(f)\setminus \widetilde{\mu}(f),$ we have $\widetilde{\mu}(f) \subseteq W \setminus \{w\}.$ Thus, $\widetilde{\mu}(f)=C_{f}^{\mu,\widetilde{\mu}}(\widetilde{\mu}(f)) \subseteq C_{P_{f}^{\mu,\widetilde{\mu}}}( W\setminus \{w\})$ and, by \emph{LAD},  
\begin{equation}\label{pruebadigrafo2}
|\widetilde{\mu}(f)| \leq | C_{f}^{\mu,\widetilde{\mu}}( W\setminus \{w\})|.
\end{equation}
By the Rural Hospitals Theorem,\footnote{The \textit{Rural Hospitals Theorem} states that,  under substitutability and \emph{LAD}, each agent is matched with the same number of partners in every stable matching. That is, $|\mu(a)|=|\mu'(a)|$ for each $\mu,\mu'\in S(P)$ and for each $a\in F\cup W$ \citep[see][for more details]{alkan2002class}.} $|\mu(f)|=|\widetilde{\mu}(f)|.$ This, together with \eqref{pruebadigrafo1} and \eqref{pruebadigrafo2} implies that $|\widetilde{\mu}(f)| \leq | C_{f}^{\mu,\widetilde{\mu}}( W\setminus \{w\})|<|\mu(f)|=|\widetilde{\mu}(f)|,$ which is absurd.

\noindent \textbf{Case 2: the terminal node  $\boldsymbol{(f',w)$ of  $p$ belongs to $V_2}$}. Then, $f' \notin \mu(w)\setminus \widetilde{\mu}(w).$ First, we claim  that $| C_{w}^{\mu,\widetilde{\mu}}( \mu(w) \cup \{f'\}) | = |\mu(w)|.$  Since  $C_{w}^{\mu,\widetilde{\mu}}(F)=\widetilde{\mu}(w)$ by Remark \ref{remark de reduccion M-M} (i) and $\mu(w) \cup \{f'\} \subseteq F,$ by \emph{LAD} it follows that
\begin{equation}\label{ecu2 lema 5}
|\widetilde{\mu}(w)| \geq | C_{w}^{\mu,\widetilde{\mu}}( \mu(w) \cup \{f'\}) |.
\end{equation}
Furthermore, by \emph{LAD} and individual rationality of $\mu,$
\begin{equation}\label{ecu1 lema 5}
| C_{w}^{\mu,\widetilde{\mu}}( \mu(w) \cup \{f'\}) | \geq | C_{w}^{\mu,\widetilde{\mu}}( \mu(w)) |= |\mu(w)|.
\end{equation}
Assume $| C_{w}^{\mu,\widetilde{\mu}}( \mu(w) \cup \{f'\}) | > |\mu(w)|.$ By \eqref{ecu1 lema 5} and \eqref{ecu2 lema 5}, it follows that $|\widetilde{\mu}(w)| > |\mu(w)|.$  This contradicts the Rural Hospitals Theorem. Therefore, $| C_{w}^{\mu,\widetilde{\mu}}( \mu(w) \cup \{f'\}) | = |\mu(w)|,$ and the proof of the claim is completed. Now, we have two subcases to consider:

\noindent \textbf{Subcase 2.1:   $\boldsymbol{f' \in C_{w}^{\mu,\widetilde{\mu}}( \mu(w) \cup \{f'\})}$}. 
As $| C_{w}^{\mu,\widetilde{\mu}}( \mu(w) \cup \{f'\}) | = |\mu(w)|,$ there is $f \in \mu(w)$ such that 
\begin{equation}\label{terminalnode1}
C^{\mu, \widetilde{\mu}}_{w}\left(\mu(w)\cup \{f'\}\right)=\left(\mu(w)\setminus \{f\}\right)\cup \{f'\}.
\end{equation}
 Furthermore, $f \notin \widetilde{\mu}(w).$ To see this, notice that if $f \in \widetilde{\mu}(w)=C_{w}^{\mu,\widetilde{\mu}}(F)$ then,  by substitutability, $f \in C_{w}^{\mu,\widetilde{\mu}}(\mu(w)\cup \{f'\}),$ contradicting \eqref{terminalnode1}. Therefore, $f \in \mu(w)\setminus \widetilde{\mu}(w)$ and \eqref{terminalnode1} imply that there is an arc from $(f',w)\in V_2$ to  $(w, f)\in V_1$. This contradicts that $(f',w)$ is a terminal node of $p.$ 

\noindent \textbf{Subcase 2.2:   $\boldsymbol{f' \notin C_{w}^{\mu,\widetilde{\mu}}( \mu(w) \cup \{f'\})}$}. First, assume that $f' \notin C_{w}( \mu(w) \cup \{f'\})$. Since $(f',w)$ is the terminal node of path $p$, there are $(w',f') \in V_1$ and an arc from $(w',f')$ to $(f',w).$ Also,  $f' \notin \mu(w),$ implying that $C_{w}( \mu(w) \cup \{f'\})= \mu(w)$. Thus, by Step 2 (b) of the reduction procedure, $\{f'\}$ is eliminated from $w$'s preference list. Thus, by Step 3 of the reduction procedure, all subsets of workers containing $w$ are eliminated from preference list of $f'$ as well. This contradicts that $(f',w)\in V_2.$ 
Second, assume that $f' \in C_{w}( \mu(w) \cup \{f'\})$. Since $f' \notin \mu(w)$ and $C_{w}( \mu(w) \cup \{f'\}) \neq \mu(w)$, then  $\{f'\}$ is not eliminated on Step 2 (b) of the reduction procedure. Moreover, Step 3 of the reduction procedure  neither eliminates $f'$ nor $w$ from each other's preference lists, because $(f',w) \in V_2$. Then, $C_{w}^{\mu,\widetilde{\mu}}( \mu(w) \cup \{f'\})=C_{w}( \mu(w) \cup \{f'\})$, implying that $f' \in C_{w}^{\mu,\widetilde{\mu}}( \mu(w) \cup \{f'\})$,  contradicting this subcase's hypothesis.

Therefore, by Cases 1 and 2, path $p$ has no terminal node so it is a  cycle in digraph $D^{\mu, \widetilde{\mu}}.$ As a consequence, a cycle  for $P^{\mu,\widetilde{\mu}}$ must also exist.
\end{proof}

\bigskip

\noindent\begin{proof}[Proof of Proposition \ref{ciclico es estable}]Let $\mu'$ be a cyclic matching under $P^{\mu,\widetilde{\mu}}$. Let $\sigma $ be the cycle associated with $\mu ^{\prime }.$ First, we prove that $\mu'$ is an individually rational matching under $P^{\mu,\widetilde{\mu}}.$ If $a\in F\cup W$ with $a\notin \sigma$, we have that  $\mu'(a)=\mu(a)$. Then, by the individual rationality of $\mu$ under $P^{\mu,\widetilde{\mu}}$ we have that $C_a^{\mu,\widetilde{\mu}}(\mu'(a))=\mu'(a).$ If $f\in \sigma$, there is $w'\in \sigma$ such that $\mu'(f)= C_f^{\mu,\widetilde{\mu}}(W \setminus \{w'\})$. Thus, $C_f^{\mu,\widetilde{\mu}}(\mu'(f))=C_f^{\mu,\widetilde{\mu}}(C_f^{\mu,\widetilde{\mu}}(W \setminus \{w'\}))=C_f^{\mu,\widetilde{\mu}}(W \setminus \{w'\})=\mu'(f).$ If $w \in\sigma$, there is $f\in \mu(w)$ and $f'\in \sigma$ such that $\mu'(w)=(\mu(w)\setminus\{f\})\cup \{f'\}.$ Then, $C_w^{\mu,\widetilde{\mu}}(\mu'(w))=C_w^{\mu,\widetilde{\mu}}((\mu(w)\setminus\{f\})\cup \{f'\})$. By definition of a cycle, $C_w^{\mu,\widetilde{\mu}}((\mu(w)\setminus\{f\})\cup \{f'\})=C_w^{\mu,\widetilde{\mu}}(C_w^{\mu,\widetilde{\mu}}(\mu(w)\cup \{f'\}))=C_w^{\mu,\widetilde{\mu}}(\mu(w)\cup \{f'\})=\mu'(w).$ Therefore, $\mu'$ is individually rational under $P^{\mu,\widetilde{\mu}}.$

Second, assume that there is a blocking pair $(f,w)$ for $\mu'$ under  $P^{\mu,\widetilde{\mu}}.$   We claim that both $f$ and $w$ belong to $\sigma.$ Furthermore, $w$ immediately precedes $f$ in cycle $\sigma.$ In order to see this, first assume that $f \notin \sigma.$ Thus,  by the definition of cyclic matching,  $\mu
^{\prime }(f)=\mu (f)$ and since,  by Remark \ref{remark de reduccion M-M} (i), $\mu ^{\prime }(f)$ is the most preferred subset of workers in $P_{f}^{\mu,\widetilde{\mu}}$, there is no $w'\notin \mu ^{\prime
}(f)$ such that $w' \in C _{f}^{\mu,\widetilde{\mu}}(\mu ^{\prime }(f)\cup \{w'\})$. When $w'=w,$ this contradicts that $(f,w)$ is a blocking pair for $\mu'$. Therefore, $f \in \sigma.$ 

 Also, as $(f,w)$ is a blocking pair for $\mu'$,  $w\in
C_{f}^{\mu ,\widetilde{\mu}}(\mu ^{\prime }(f)\cup \{w\}).$ By the definition of cycle, there is $w'$ such that $C_{f}^{\mu
,\widetilde{\mu}}(W\setminus \{w'\})=\mu'(f)$ and thus $w\in C_{f}^{\mu ,\widetilde{\mu}}(C_{f}^{\mu,\widetilde{\mu}}(W\setminus
\{w' \})\cup \{w\})$ which in turn,  by \eqref{propiedad 1}, becomes
\begin{equation}\label{ciclico es estable 1}
w\in C_{f}^{\mu,\widetilde{\mu}}((W\setminus \{w'\})\cup \{w\}).
\end{equation}
To see that $w$ immediately precedes $f$ in cycle $\sigma,$ i.e. $w = w',$ assume that $w\neq w'.$ Then, $%
w\in W\setminus \{w'\}$ and, therefore, \eqref{ciclico es estable 1} implies  $w\in C_{f}^{\mu
,\widetilde{\mu}}(W\setminus \{w'\})=\mu'(f).$ Thus, $w\in \mu ^{\prime }(f)$, which contradicts $(f,w)$ being a blocking pair for $\mu'$. Hence, $w=w'.$ This completes our claim. 

To finish our proof, notice that by definition of cyclic matching and the fact that  $w=w' \in \sigma,$ there is $f'$ such that 
\begin{equation}\label{ciclico es estable 2}
C_{w}^{\mu,\widetilde{\mu}}(\mu(w)\cup \{f'\})=(\mu(w) \setminus \{f\}) \cup \{f'\}=\mu'(w).
\end{equation}
Since $\mu(w)\cup \{f'\}=\mu'(w)\cup \{f\},$ using \eqref{ciclico es estable 2} and $f \notin \mu'(w)$ (that follows from $(f,w)$ being a blocking pair for $\mu'$) we have that $f \notin C_{w}^{\mu,\widetilde{\mu}}(\mu'(w)\cup \{f\}).$ But then  again we contradict that $(f,w)$ is a blocking pair for $\mu'.$ Hence,    $\mu ^{\prime }\in S\left( P^{\mu,\widetilde{\mu}}\right).$
\end{proof}

\bigskip

\noindent \begin{proof}[Proof of Proposition \ref{matching ciclicio arriba del peor}]Let $\mu, \mu' \in S(P)$ with $\mu \succ_F \mu'$. Consider the reduced preference profile $P^{\mu,\mu'}$. By Proposition \ref{existencia de ciclo}, there is a cycle $\sigma$ for $P^{\mu,\mu'}$. Let $\mu_{\sigma}$ be its corresponding cyclic matching under $P^{\mu,\mu'}$. By Proposition \ref{ciclico es estable}, $\mu_{\sigma}\in S(P^{\mu,\mu'})$ and, consequently, $\mu_{\sigma}\succeq_{F}\mu'$ by Lemma \ref{estable original sii estable en el reducido M-to-M}. Furthermore, $\mu \succeq_{F}\mu_{\sigma}$ follows straightforward from the fact that $\mu_{\sigma}\in S(P^{\mu})$ and that $\mu$ is the firm-optimal stable matching for $P^{\mu}$.
\end{proof}

\begin{lemma}\label{lema4}
Let $\mu,\widetilde{\mu}\in S(P)$ with $\mu\succeq_F\widetilde{\mu}$. If $\widetilde{\mu}$ is a cyclic matching under  $P^{\mu,\widetilde{\mu}}$, then $\widetilde{\mu}$ is a cyclic matching under  $P^{\mu}$. 
\end{lemma}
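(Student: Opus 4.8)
The plan is to show that the very cycle $\sigma=\{(w_{1},f_{1}),\ldots,(w_{r},f_{r})\}$ that witnesses $\widetilde{\mu}$ as a cyclic matching under $P^{\mu,\widetilde{\mu}}$ is already a cycle for $P^{\mu}$. By Definition \ref{defino matching ciclico} the cyclic matching $\mu_{\sigma}$ depends only on $\mu$ and $\sigma$, not on the ambient reduced profile, so once $\sigma$ is known to be a cycle for $P^{\mu}$ its cyclic matching under $P^{\mu}$ is again $\widetilde{\mu}$ and we are done. We may assume $\widetilde{\mu}\neq\mu_{W}$ (otherwise $P^{\mu,\widetilde{\mu}}=P^{\mu}$), and we recall that a cyclic matching exists only when $\mu\succ_{F}\widetilde{\mu}$; by optimality of $\mu_{W}$ and polarization of interests, $\mu\succeq_{F}\widetilde{\mu}\succeq_{F}\mu_{W}$, so by Corollary \ref{corolario 1} $\widetilde{\mu}\in S(P)=S(P^{\mu})$. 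It remains to check Conditions (i)--(iii) of Definition \ref{defino cyclo} for $P^{\mu}$.

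For Condition (i) we must upgrade $w_{i}\in\mu(f_{i})\setminus\widetilde{\mu}(f_{i})$ to $w_{i}\in\mu(f_{i})\setminus\mu_{W}(f_{i})$. Suppose $w_{i}\in\mu_{W}(f_{i})$, equivalently $f_{i}\in\mu_{W}(w_{i})$. From $\mu(f_{i})\succeq_{f_{i}}\widetilde{\mu}(f_{i})$ and $w_{i}\in\mu(f_{i})$, substitutability gives $w_{i}\in C_{f_{i}}(\widetilde{\mu}(f_{i})\cup\{w_{i}\})$; from $\mu_{W}(w_{i})\succeq_{w_{i}}\widetilde{\mu}(w_{i})$ and $f_{i}\in\mu_{W}(w_{i})$, substitutability gives $f_{i}\in C_{w_{i}}(\widetilde{\mu}(w_{i})\cup\{f_{i}\})$. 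Since $w_{i}\notin\widetilde{\mu}(f_{i})$, this makes $(f_{i},w_{i})$ a blocking pair of $\widetilde{\mu}$ under $P$, a contradiction; so Condition (i) holds.

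The heart of the proof is Conditions (ii) and (iii), which I would derive by comparing the two reductions. Because $\widetilde{\mu}\succeq_{F}\mu_{W}$, the reduction producing $P^{\mu,\widetilde{\mu}}$ is weakly more aggressive than the one producing $P^{\mu}$: Steps 1(a) and 2(b) depend only on $\mu$ and coincide, whereas in Step 2(a) (resp.\ Step 1(b)) the reference set $\widetilde{\mu}(f)$ (resp.\ $\widetilde{\mu}(w)$) replaces $\mu_{W}(f)$ (resp.\ $\mu_{W}(w)$), which — using transitivity of Blair's order and the fact that $\widetilde{\mu}\in S(P^{\mu})$, so that $\widetilde{\mu}(f)$ and $\widetilde{\mu}(w)$ are never deleted — can only enlarge the families of deleted sets, and then the Step 3 cascade deletes only more. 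Hence the acceptable sets of $P^{\mu,\widetilde{\mu}}_{a}$ form an order-preserving sublist of those of $P^{\mu}_{a}$. Fix $f_{i}$, and for readability suppose $f_{i}$ occurs once in $\sigma$ (otherwise replace $\{w_{i}\}$ by $\{w_{j}:f_{j}=f_{i}\}$ throughout). Put $A_{i}:=(\mu(f_{i})\setminus\{w_{i}\})\cup\{w_{i+1}\}$, so that by hypothesis $A_{i}=\widetilde{\mu}(f_{i})=C^{\mu,\widetilde{\mu}}_{f_{i}}(W\setminus\{w_{i}\})$ and $A_{i}\subseteq W\setminus\{w_{i}\}$; and let $B_{i}:=C^{\mu}_{f_{i}}(W\setminus\{w_{i}\})$. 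Using $\mu(f_{i})=C^{\mu}_{f_{i}}(W)$ and property \eqref{propiedad 1} one gets $\mu(f_{i})=C^{\mu}_{f_{i}}(B_{i}\cup\{w_{i}\})$, and \emph{LAD} together with the Rural Hospitals Theorem (exactly as in the proof of Proposition \ref{existencia de ciclo}) and individual rationality of $\widetilde{\mu}$ under $P^{\mu}$ force $|B_{i}|=|\mu(f_{i})|$; consequently $B_{i}=(\mu(f_{i})\setminus\{w_{i}\})\cup\{\bar{w}\}$ for a unique $\bar{w}\notin\mu(f_{i})$, and Condition (ii) reduces to $\bar{w}=w_{i+1}$, i.e.\ $B_{i}=A_{i}$. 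Since $A_{i}=\widetilde{\mu}(f_{i})$ is acceptable in $P^{\mu,\widetilde{\mu}}_{f_{i}}$ — hence in $P^{\mu}_{f_{i}}$ — and $A_{i}\subseteq W\setminus\{w_{i}\}$, the sublist relation yields $B_{i}\geq_{P_{f_{i}}}A_{i}$; so it suffices to show $B_{i}$ is itself acceptable in $P^{\mu,\widetilde{\mu}}_{f_{i}}$, for then $A_{i}=C^{\mu,\widetilde{\mu}}_{f_{i}}(W\setminus\{w_{i}\})\geq_{P_{f_{i}}}B_{i}$ and $B_{i}=A_{i}$. This last point is the main obstacle: one must rule out that $B_{i}$, which survives the $P^{\mu}$-reduction, is deleted by the $P^{\mu,\widetilde{\mu}}$-reduction. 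The only element of $B_{i}$ that could trigger such a deletion is $\bar{w}$ (the remaining elements lie in $\mu(f_{i})\setminus\{w_{i}\}\subseteq\widetilde{\mu}(f_{i})$, hence cannot be deleted against the reference $\widetilde{\mu}(f_{i})$, and $f_{i}$ stays acceptable to each of them), so one is reduced to showing that, if $\bar w\neq w_{i+1}$, the worker $\bar{w}$ cannot lie in a subset Blair-dominated by $\widetilde{\mu}(f_{i})$ for $f_{i}$ (nor become unacceptable to $f_{i}$) without $B_{i}$ already having been deleted by the $\mu_{W}$-reduction; I would settle this by exploiting that $B_{i}$ differs from the top set $\mu(f_{i})$ only in $\bar{w}$. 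Condition (iii) is entirely symmetric, with $C^{\mu}_{w_{i}}(\mu(w_{i})\cup\{f_{i-1}\})$ in place of $C^{\mu}_{f_{i}}(W\setminus\{w_{i}\})$ and Step 1(b) in place of Step 2(a). Assembling (i)--(iii), $\sigma$ is a cycle for $P^{\mu}$ and therefore $\widetilde{\mu}=\mu_{\sigma}$ is a cyclic matching under $P^{\mu}$.
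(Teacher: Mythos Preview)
Your overall strategy---showing that the same cycle $\sigma$ works for $P^{\mu}$---matches the paper's, and your treatment of Condition~(i) is correct (the paper derives $w_i\in\widetilde{\mu}(f_i)$ directly from substitutability rather than via a blocking pair, but either route is fine). The real content lies in (ii) and (iii), and here your proposal has a genuine gap: you explicitly defer proving that $B_i:=C^{\mu}_{f_i}(W\setminus\{w_i\})$ remains acceptable in $P^{\mu,\widetilde{\mu}}_{f_i}$, and the hint you give (``$B_i$ differs from $\mu(f_i)$ only in $\bar{w}$'') does not by itself close it. The missing ingredient is the \emph{stability of $\widetilde{\mu}$}. From $\bar{w}\in C^{\mu}_{f_i}(W\setminus\{w_i\})$ and substitutability one gets $\bar{w}\in C_{f_i}(\widetilde{\mu}(f_i)\cup\{\bar{w}\})$, and then property~\eqref{propiedad 1} rules out deletion of $\bar{w}$ in Step~2(a). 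Step~3 could still delete $\bar{w}$ from $f_i$'s list if $f_i$ became unacceptable to $\bar{w}$ via Step~1(b); but any $F'\ni f_i$ with $F'\succ_{\bar{w}}\widetilde{\mu}(\bar{w})$ would, by substitutability, force $f_i\in C_{\bar{w}}(\widetilde{\mu}(\bar{w})\cup\{f_i\})$, which combined with $\bar{w}\in C_{f_i}(\widetilde{\mu}(f_i)\cup\{\bar{w}\})$ and $\bar{w}\notin\widetilde{\mu}(f_i)$ makes $(f_i,\bar{w})$ a blocking pair of $\widetilde{\mu}$---a contradiction. Without invoking stability here your argument for (ii) is incomplete.

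Your claim that Condition~(iii) is ``entirely symmetric'' to (ii) also overlooks an essential asymmetry. For (iii) the choice is taken from the \emph{small} set $\mu(w_i)\cup\{f_{i-1}\}$, and every firm in this set is individually acceptable to $w_i$ under $P^{\mu,\widetilde{\mu}}$: those in $\mu(w_i)$ because $\mu\in S(P^{\mu,\widetilde{\mu}})$, and $f_{i-1}$ because $f_{i-1}\in\widetilde{\mu}(w_i)$. Hence $C^{\mu}_{w_i}(\mu(w_i)\cup\{f_{i-1}\})$, being a subset of this set, cannot be eliminated in the stronger reduction, and one gets $C^{\mu}_{w_i}(\mu(w_i)\cup\{f_{i-1}\})=C^{\mu,\widetilde{\mu}}_{w_i}(\mu(w_i)\cup\{f_{i-1}\})$ directly---this is exactly the paper's argument for (iii), and it requires neither \emph{LAD} nor the sandwich $B_i=A_i$ machinery you set up. No such shortcut is available for (ii), since there the choice is from all of $W\setminus\{w_i\}$; the paper's printed argument for (ii), incidentally, records only $C^{\mu}_{f_i}(\widetilde{\mu}(f_i))=\widetilde{\mu}(f_i)$, which is individual rationality and not yet the required identity $C^{\mu}_{f_i}(W\setminus\{w_i\})=\widetilde{\mu}(f_i)$.
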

\begin{proof}
Let $\mu,\widetilde{\mu}\in S(P)$ with $\mu\succeq_F\widetilde{\mu}$ and let $\widetilde{\mu}$ be a cyclic matching under  $P^{\mu,\widetilde{\mu}}$. By Theorem \ref{estable original sii estable en el reducido M-to-M}, $\widetilde{\mu}\in S(P^{\mu})$. Let $\sigma=\{(w_{1},f_{1}),(w_{2},f_{2}),\ldots,(w_r ,f_{r})\}$ be the cycle associated with $\widetilde{\mu}$. We only need to prove that $\sigma$ is a cycle for $P^\mu$. First, notice that for each $(w_i,f_i)\in\sigma$, $w_i \in \mu(f_i)\setminus\widetilde{\mu}(f_i)$ implies that $w_i \in \mu(f_i)\setminus\mu_W(f_i).$ Otherwise,  $w_i \in \mu_W(f_i)$ and $w_i \in \mu(f_i)= C_{f_i}^{\mu} (\mu(f_i) \cup \widetilde{\mu}(f_i) \cup \mu_W(f_i))$ imply, by substitutability, that $w_i\in C_{f_i}^{\mu} (\widetilde{\mu}(f_i) \cup \mu_W(f_i))= \widetilde{\mu}(f_i)$, a contradiction. Second, by definition of cycle for $P^{\mu,\widetilde{\mu}}$ and Definition \ref{defino matching ciclico},  $\widetilde{\mu}(f_i)=C_{f_i}^{\mu,\widetilde{\mu}}(W \setminus \{w_i\})=(\mu(f_i) \setminus \{w_i\})\cup \{w_{i+1}\}.$ By Proposition \ref{ciclico es estable}, $\widetilde{\mu}$ is individually rational under  $P^{\mu,\widetilde{\mu}}$. By Lemma \ref{individual dentro de mu y mu tilde}, $\widetilde{\mu}$ is individually rational under  $P^{\mu}.$ Thus, $C_{f_i}^{\mu}(\widetilde{\mu}(f_i))=\widetilde{\mu}(f_i).$ Hence, 
\begin{equation}\label{ecu lema apendice 1}
C_{f_i}^{\mu}(\widetilde{\mu}(f_i))=(\mu(f_i) \setminus \{w_i\})\cup \{w_{i+1}\}.
\end{equation} 
Lastly, again by definition of cycle for $P^{\mu,\widetilde{\mu}}$, we have 
$$C^{\mu,\widetilde{\mu}}_{w_i}(\mu(w_i) \cup \{f_{i-1}\}) =\left(\mu(w_i) \setminus \{f_i\}\right)\cup \{f_{i-1}\}=\widetilde{\mu}(w_i).$$
Now, we prove that $C^{\mu,\widetilde{\mu}}_{w_i}(\mu(w_i) \cup \{f_{i-1}\})=C^{\mu}_{w_i}(\mu(w_i) \cup \{f_{i-1}\}).$ 
By the reduction procedure, we have that $C^{\mu,\widetilde{\mu}}_{w_i}(\mu(w_i) \cup \{f_{i-1}\})\subseteq C^{\mu}_{w_i}(\mu(w_i) \cup \{f_{i-1}\}).$
Now, assume that  $ C^{\mu}_{w_i}(\mu(w_i) \cup \{f_{i-1}\})\neq C^{\mu,\widetilde{\mu}}_{w_i}(\mu(w_i) \cup \{f_{i-1}\}).$ This implies that $ C^{\mu}_{w_i}(\mu(w_i) \cup \{f_{i-1}\})$  is eliminated in the reduction procedure to obtain $P^{\mu,\widetilde{\mu}}$. Since $\mu\in S(P^{\mu,\widetilde{\mu}})$, then the only possibility is that the firm selected by the reduction procedure to eliminate from $ C^{\mu}_{w_i}(\mu(w_i) \cup \{f_{i-1}\})$ be $f_{i-1}$. This contradicts that $ \widetilde{\mu} $ is individually rational under $ P^{\mu,\widetilde{\mu}} $, because $ f_{i-1}\in  \widetilde{\mu}(w_i).$
\end{proof}

\noindent \begin{proof}[Proof of Proposition \ref{todo estable es ciclico}]Let $\mu'\in S(P)\setminus \{\mu_F\}$ and consider the reduced preference profile $P^{\mu_F,\mu'}$. If $\mu'$ is a cyclic matching under $P^{\mu_F,\mu'}$, then by Lemma \ref{lema4} $\mu'$ is a cyclic matching under $P^{\mu_F}$ and the proof is complete. If not, by Proposition \ref{matching ciclicio arriba del peor} there is a cyclic matching under $P^{\mu_F,\mu'}$, say $\mu_1$, such that $\mu_1\succ_F \mu'$. By Lemma \ref{estable original sii estable en el reducido M-to-M} and Proposition \ref{ciclico es estable}, $\mu_1\in S(P)$, so we can consider the reduced preference profile $P^{\mu_1,\mu'}$. If $\mu'$ is a cyclic matching under $P^{\mu_1,\mu'}$, then by Lemma \ref{lema4} $\mu'$ is a cyclic matching under $P^{\mu_1}$, and the proof is complete. If not, continue this process until, by the finiteness of $S(P)$, there is $\mu_k\in S(P)$ such that $\mu'$ is a cyclic matching under $P^{\mu_k,\mu'}$ , then by Lemma \ref{lema4} $\mu'$ is a cyclic matching under $P^{\mu_k}$.
\end{proof}

\noindent \begin{proof}[Proof of Theorem \ref{Teorema final}]
Let $(F,W,P)$ be a matching market. First, notice that by Proposition \ref{existencia de ciclo}, for each reduced profile obtained in Step $t-1$, there is at least a cycle. Proposition \ref{ciclico es estable} and Theorem \ref{estable original sii estable en el reducido M-to-M} show that each cyclic matching obtained by the algorithm belongs to $S(P)$. To see that each stable matching is computed by the algorithm, assume that it is not the case for $\mu \in S(P)\setminus \{\mu_F\}.$ By Proposition  \ref{todo estable es ciclico}, there is another $\mu'\in S(P)$ such that $\mu$ is a cyclic matching under $P^{\mu'}$ (remember that, as $\mu$ is a cyclic matching under $P^{\mu'}$, $\mu' \succ_F \mu$).  Thus, $\mu'$ is not computed by the algorithm either (otherwise, if $\mu'$ is computed by the algorithm in Step $t$, $\mu$ necessarily is computed in Step $t+1$). Thus, again by Proposition  \ref{todo estable es ciclico}, there is another $\mu''\in S(P)$ such that $\mu'$ is a cyclic matching under $P^{\mu''}$ with $\mu'' \succ_F \mu'$ and $\mu''$ is not computed by the algorithm either. Continuing this line of reasoning, by the finiteness of the set $S(P)$, we eventually reach $\mu_F$ and conclude that the algorithm cannot compute it either, which is absurd. 
\end{proof}

\end{document}